\documentclass[letterpaper, 10 pt, conference]{ieeeconf}
\IEEEoverridecommandlockouts     
\usepackage{cite}
\usepackage{amsmath,amssymb,amsfonts}

\usepackage{amsthm}
\usepackage{amsthm}
\usepackage{comment}
\newtheorem{lem}{Lemma}
\newtheorem{thm}{Theorem}

\theoremstyle{definition}

\newtheorem{assumption}{Assumption}

\usepackage{caption}  %
\usepackage{subcaption} %

\usepackage{amsmath, amssymb, amsfonts}
\usepackage{xcolor}
\newcommand{\boxend}{\hfill \ensuremath{\Box}}

\usepackage{algorithm}
\usepackage{algpseudocode}

\newcommand{\FedLQR}{\texttt{FedLQR}~}

\usepackage{xcolor}

\usepackage{bm}             
\usepackage{dsfont}         

\usepackage{mathtools}       
\usepackage{array}         

\usepackage{graphicx}
\usepackage{textcomp}
\def\BibTeX{{\rm B\kern-.05em{\sc i\kern-.025em b}\kern-.08em
    T\kern-.1667em\lower.7ex\hbox{E}\kern-.125emX}}
\begin{document}
\title{Scalar Federated Learning for Linear Quadratic Regulator}
\author{Mohammadreza Rostami, Shahriar Talebi, \emph{Member, IEEE} and Solmaz S. Kia, \emph{Senior Member, IEEE}
\thanks{This work is partially supported by the 2026 UCLA–UCI PI grant awarded by the Samueli Foundation.}
\thanks{M. Rostami and S. Kia are with the University of California, Irvine, CA, US, and S. Talebi is with the University of California, Los Angeles, CA, US. Emails: {\em\small mrostam2@uci.edu}, {\em\small s.talebi@ucla.edu} and {\em\small solmaz@uci.edu} }
}

\maketitle

\begin{abstract}
We propose \textsc{ScalarFedLQR}, a communication-efficient federated algorithm for model-free learning of a \emph{common policy} in linear quadratic regulator (LQR) control of heterogeneous agents. The method builds on a decomposed projected gradient mechanism, in which each agent communicates only a scalar projection of a local zeroth-order gradient estimate. The server aggregates these scalar messages to reconstruct a global descent direction, reducing per-agent uplink communication from $\mathcal{O}(d)$ to $\mathcal{O}(1)$, independent of the policy dimension.
Crucially, the projection-induced approximation error diminishes as the number of participating agents increases, yielding a favorable scaling law: larger fleets enable more accurate gradient recovery, admit larger stepsizes, and achieve faster linear convergence despite high dimensionality. Under standard regularity conditions, all iterates remain stabilizing and the average LQR cost decreases linearly fast. Numerical results demonstrate performance comparable to full-gradient federated LQR with substantially reduced communication.
\end{abstract}

\section{Introduction}

Policy optimization (PO) is a promising paradigm for data-driven control~\cite{hu_toward_2023, talebi_policy_2026}. Despite nonconvexity, policy gradient (PG) methods enjoy global convergence in structured settings such as LQR~\cite{fazel_global_2018,bu_lqr_2019}. However, large-scale deployment on physical systems is constrained by two fundamental bottlenecks: (i) communication overload, as transmitting high-dimensional gradients under limited bandwidth becomes prohibitive and scales with fleet size~\cite{mcmahan_communication-efficient_2017, konevcny_federated_2016}; and (ii) sample inefficiency, since model-free PG requires $\mathcal{O}(1/\epsilon^2)$ trajectory rollouts per step—untenable in real-world operation~\cite{dean_sample_2020}. Recent work partially mitigates these challenges: \texttt{D2SPI} addresses homogeneous networks~\cite{alemzadeh_data-driven_2024}, while \texttt{FedLQR} extends to heterogeneous but similar agents~\cite{wang_model-free_2023}.

A central, often underemphasized limitation underlying these bottlenecks is the \emph{physical cost of each gradient sample}. In zeroth-order (ZO) model-free LQR~\cite{fazel_global_2018, malik_derivative-free_2019}, estimating $\nabla J(K)$ for $K \in \mathbb{R}^{n_u \times n_x}$ requires executing perturbed policies $\widehat{K}_s = K + U_s$, collecting trajectories, and averaging costs over $n_s$ perturbations via
\begin{equation}\label{eq:local-grad-estimate}
\textstyle \widehat{\nabla}J(K) = \frac{1}{n_s}\sum_{s=1}^{n_s}
\frac{n_x n_u}{r^2} \widehat{J}_s U_s,
\end{equation}
where $\widehat{J}_s = J(\widehat{K}_s)$, achieving $\epsilon$-accurate gradients with $n_s = \mathcal{O}(1/\epsilon^2)$. Crucially, each “sample” is not a cheap computation but a full trajectory rollout of length $\tau$ on a live system~\cite{dulac-arnold_challenges_2019}. 
In practice, this cost is tangible: a drone must interrupt its mission and expend battery, a power grid controller applies perturbations that stress equipment, and a robotic arm incurs production downtime. Reducing sample complexity is therefore not merely theoretical but essential for safe, continuous deployment of learning-based control~\cite{mohammadi_learning_2020,zhao_global_2023,talebi_data-driven_2023,umenberger_globally_2022}. 
A key opportunity arises in large-scale multi-agent systems: when $M$ agents with similar (but not necessarily identical) dynamics pool trajectory data, the per-agent sampling burden decreases by a factor of $M$. As shown by Wang et al.~\cite{wang_model-free_2023} in \texttt{FedLQR}, the requirement improves from $\mathcal{O}(1/\epsilon^2)$ to $\mathcal{O}(1/(M\epsilon^2))$ by exploiting independent gradient noise across agents while optimizing average fleet performance.

Despite its strengths, \texttt{FedLQR} faces two limitations at scale. First, requiring all $M$ agents to sample each round enforces continuous exploration, though this can be mitigated by subsampling since per-agent sample complexity already scales as $\mathcal{O}(1/(M\epsilon^2))$\cite{qi_federated_2021}. Second, and more fundamentally, each agent must transmit a full gradient matrix $\Delta^{(i)} \in \mathbb{R}^{n_u \times n_x}$, incurring $\mathcal{O}(d)$ uplink cost and a total server burden of $\mathcal{O}(Md)$ with $d=n_u n_x$. This cost grows with both fleet size and system dimension—precisely where collaboration is most valuable—and additionally exposes sensitive local dynamics through gradient inversion attacks\cite{zhu_deep_2019}. These limitations motivate our approach: achieving constant-size uplink with built-in structural privacy, decoupling communication from system dimension while retaining guaranteed fast federated learning and iterative stability.

We propose \textsc{ScalarFedLQR}, which resolves this tension by compressing the uplink via \emph{projected directional derivatives}. Rather than transmitting the full gradient $\nabla J(K) \in \mathbb{R}^d$, each participating agent computes a local zeroth-order estimate $\widehat{\nabla}J(K)$, samples a Rademacher direction $v \in {\pm1}^d$ using a shared pseudorandom generator, and sends only the scalar projection $\langle v, \widehat{\nabla}J(K)\rangle$ along with the seed. The server reconstructs $v$ deterministically from the seed and aggregates these scalar messages to obtain a global descent direction. This reduces per-agent communication from $\mathcal{O}(d)$ to $\mathcal{O}(1)$ and total server cost from $\mathcal{O}(Md)$ to $\mathcal{O}(M)$ for $M$ active agents, independent of system dimension. The induced error decomposes into projection distortion and zeroth-order estimation noise, jointly governed by sample size, dimension, and fleet size~\cite{fournier2023can,silver2021learning,rostami2024projected,nesterov2017random}.

Under standard regularity conditions on the average LQR cost (e.g., a Polyak–\L{}ojasiewicz condition with constant $\mu_c$ and local Lipschitz continuity with $L_c$ over a stabilizing sublevel set), we establish linear convergence:

\smallskip
\noindent\textbf{Theorem. }(Linear convergence—Informal)
\emph{If the local zeroth-order relative errors are bounded by $\epsilon$ and
\[\textstyle d \, \log(2d/\delta) \, M^{-1} \,(1+\epsilon)^2 \lesssim \beta^2 \quad \text{for some } \beta \in (0,1),\]
then, with a suitable constant stepsize and probability at least $1-\delta$, \textsc{ScalarFedLQR} achieves geometric decay of the average cost at rate
$1 - (\mu_c/L_c){(1-\beta)^2}/{(1+\beta)^2}.$\boxend}
\smallskip

\noindent This result highlights a key large-scale advantage: although scalar projection introduces dimension-dependent error, averaging across agents reduces its impact as $M$ grows. Consequently, larger fleets admit smaller $\beta$, enabling larger stepsizes and faster linear convergence—even in high dimensions. In contrast, smaller or noisier settings require more conservative updates. Thus, \textsc{ScalarFedLQR} achieves a compounding benefit at scale: scalar per-agent communication with improving stability and convergence as fleet size increases.

The rest of the paper is organized as follows.  
\S II formulates the federated model-free LQR problem and introduces the stabilizing set and similarity assumptions.  
\S III presents the \textsc{ScalarFedLQR} algorithm and \S IV analyzes its stability and convergence properties, including high-probability bounds on the scalar-projection error and linear convergence under a PL condition.  
\S V provides numerical experiments comparing \textsc{ScalarFedLQR} with \texttt{FedLQR} under varying levels of heterogeneity and communication budgets.  
\S VI concludes the paper and outlines directions for future work.

\section{Problem Formulation and Objective}

We consider a network of $M$ agents, each governed by discrete-time linear time-invariant (LTI) dynamics
\begin{equation}\label{eq:dynamics}
x^{(n)}_{t+1} = A^{(n)} x^{(n)}_t + B^{(n)} u^{(n)}_t, \qquad n = 1,\dots,M,
\end{equation}
where $x^{(n)}_t \in \mathbb{R}^{n_x}$ and $u^{(n)}_t \in \mathbb{R}^{n_u}$ denote the state and control input of agent $n$, and the system matrices $(A^{(n)}, B^{(n)})$ are unknown and may vary across agents. All agents share the same state and input dimensions but may exhibit heterogeneous dynamics.

\begin{assumption}[Similarity of dynamics]
There exists a nominal linear model $(A,B)$ such that for all agents $n$,
\begin{equation}\label{eq::system_bound}
\|A^{(n)}-A\|_2 \leq \epsilon_1, \quad\text{and}\quad \|B^{(n)}-B\|_2 \le \epsilon_2,
\end{equation}
for some heterogeneity parameters $\epsilon_1,\epsilon_2 \ge 0$.
\boxend \end{assumption}
This similarity assumption captures the setting in which agents have distinct but closely related dynamics, ensuring the existence of a meaningful common policy.

Each agent applies a static state-feedback control law $u^{(n)}_t = -K x^{(n)}_t$, where $K \in \mathbb{R}^{n_u \times n_x}$ is a common policy gain to be learned cooperatively. Under this policy, agent $n$ incurs the infinite-horizon quadratic cost
\begin{equation}
J^{(n)}(K) := \mathbb{E}\!\left[ \sum\nolimits_{t=0}^{\infty} x^{(n)\top}_t Q x^{(n)}_t + u^{(n)\top}_t R u^{(n)}_t \right],
\end{equation}
where $Q \succeq 0$ and $R \succ 0$ are fixed cost matrices shared by all agents. The goal of federated learning is to compute a \emph{single} policy gain $K$ that minimizes the average LQR cost
\begin{equation}
\textstyle J_{\mathrm{avg}}(K) := \frac{1}{M} \sum\nolimits_{n=1}^M J^{(n)}(K).
\end{equation}

We emphasize that this objective differs from classical distributed control, where each agent learns its own policy. Here, we instead learn a \emph{common policy} across agents, leveraging similarity in dynamics to accelerate learning through data aggregation. While such a policy is not individually optimal, it provides a robust baseline that generalizes across agents and can be locally fine-tuned if needed. This setup enables fast fleet-level learning while retaining adaptability. The central challenge, however, is stability: under heterogeneous dynamics, a policy that stabilizes one agent may destabilize another, making the design of a \emph{commonly stabilizing policy} the key difficulty in federated LQR.
We therefore define the per-agent stabilizing set
\[
  \mathcal{S}^{(n)} := \{ K : A^{(n)} - B^{(n)} K \text{ is Schur stable} \},
\]
and let $\mathcal{S} := \bigcap_{n=1}^M \mathcal{S}^{(n)}$ denote the
set of gains that stabilize all agents simultaneously. Each
$\mathcal{S}^{(n)}$ is nonempty and open whenever
$(A^{(n)}, B^{(n)})$ is stabilizable, and $\mathcal{S}$ inherits
these properties under the following assumption.

\begin{assumption}[Initial stabilizing policy]
There exists $K_0 \in \mathcal{S}$ that stabilizes
all agents simultaneously.
\boxend \end{assumption}

This assumption is standard in policy optimization and can be
satisfied via conservative model-based design, offline analysis,
or a simple baseline controller when the dynamics are open-loop
stable. Fully online stabilization is beyond the scope of this
work. Additional topological properties of $\mathcal{S}$---particularly
under the \emph{similar dynamics} hypothesis---are of independent interest but not central to our analysis.

Given $K_0 \in \mathcal{S}$, the federated optimization problem is
\begin{equation}
  \min_{K \in \mathcal{S}} \; J_{\mathrm{avg}}(K).
\end{equation}
We additionally impose a \emph{communication constraint}: agents
may not transmit full policy gains or high-dimensional gradient
vectors to the server and are instead restricted to a
constant-size message per round. This models realistic bandwidth
and energy limitations in large-scale multi-agent systems and is
treated as integral to the problem formulation.

Accordingly, the objective of this work is to design a federated policy optimization scheme that
(i) minimizes the average LQR cost,
(ii) maintains all iterates within the common stabilizing set $\mathcal{S}$, and
(iii) operates under a communication model in which each agent transmits only $\mathcal{O}(1)$ information per round.
In the next section, we present a federated algorithm that operates under this communication model and analyze its stability and convergence properties.

\begin{algorithm}[t]\footnotesize
\caption{\textsc{ScalarFedLQR}: Federated LQR via Scalar Gradient Projections }
\label{alg:scalarfedlqr_inline}
\begin{algorithmic}[1]

\State \textbf{Input:} initial stabilizing gain $K_0$, learning rate $\eta$, rounds $T$

\For{each round $t = 0,1,\ldots,T-1$}
    \State Server broadcasts $K_t$ to all $n \in [M]$

    \For{each client $n \in [M]$ \textbf{in parallel}}
        \State Generate i.i.d.\ Rademacher vector $v_{t,n}\in\{-1,+1\}^d$ using seed $\xi_{t,n}$
        \State Normalization: $v_{t,n} \gets \frac{v_{t,n}}{\|v_{t,n}\|_2}$
        \State Compute local ZO gradient estimate $\tilde g_{t,n}$ at $K_t$
        \State \textbf{Encode:}\quad $r_{n}^t \gets v_{t,n}^\top \tilde g_{t,n}$ \Comment{scalar projection}
        \State Upload $r_{n}^t\in\mathbb{R}$ and seed $\xi_{t,n}$ to the server
    \EndFor

    \State $\Delta_{\mathrm{sum}} \gets \mathbf{0}_d$ \Comment{reset decoder accumulator}

    \For{each client $n \in [M]$}
        \State Server regenerates $v_{t,n}\in\{-1,+1\}^d$ using seed $\xi_{t,n}$
        \State Normalization: $v_{t,n} \gets \frac{v_{t,n}}{\|v_{t,n}\|_2}$

        \State \textbf{Decode:}\quad $\Delta_{\mathrm{sum}} \gets \Delta_{\mathrm{sum}} + r_{n}^t\, v_{t,n}$
    \EndFor

    \State \textbf{Aggregate:}\quad $\bar g_t \gets \frac{d}{M}\,\Delta_{\mathrm{sum}}$
    \State \textbf{Model update:}\quad $K_{t+1} \gets K_t - \eta\,\bar g_t$
\EndFor

\State \textbf{Output:} $K_T$

\end{algorithmic}
\end{algorithm}

\section{ScalarFedLQR Algorithm}

We present \textsc{ScalarFedLQR} (Algorithm~\ref{alg:scalarfedlqr_inline}), a communication-efficient federated policy optimization method for LQR systems. At each round $t$, the server broadcasts the current policy gain $K_t$ to all agents. Each agent $n$ computes a local zeroth-order estimate of its policy~gradient,
\begin{align}\label{eq::g_t_n}
\tilde g_{t,n} \coloneqq \widehat{\nabla} J^{(n)}(K_t),
\end{align}
using trajectory rollouts under the current policy.

Instead of transmitting the full gradient vector, each agent samples a random Rademacher direction
$v_{t,n} \in \{-1,+1\}^d$ using a locally generated random seed, normalizes it, and forms the scalar projection
\[
r_n^t = v_{t,n}^\top \tilde g_{t,n}.
\]
The agent uploads only this scalar together with the corresponding seed.

On the server side, the same random directions are deterministically regenerated using the received seeds. The server then constructs an aggregated descent direction according to
\begin{align}\label{eq:g_bar}
\bar g_t
=
\frac{d}{M}
\sum\nolimits_{n=1}^M
r_n^t\, v_{t,n}
=
\frac{d}{M}
\sum\nolimits_{n=1}^M
\bigl(v_{t,n} v_{t,n}^\top\bigr)\tilde g_{t,n}.
\end{align}
The shared policy is updated via the gradient descent step
\[
K_{t+1} = K_t - \eta\, \bar g_t,
\]
where $\eta>0$ is the stepsize.

Under this protocol, each agent transmits only a single real-valued scalar and an integer-valued seed per round. As a result, the uplink communication cost per agent is $\mathcal{O}(1)$, independent of the policy dimension $d = n_u n_x$. The server-side computation scales linearly with the number of participating agents.

The following sections analyze how the approximation error introduced by scalar projection and zeroth-order estimation affects stability and convergence, and establish conditions under which the iterates produced by \textsc{ScalarFedLQR} remain stabilizing and converge to the average optimal policy.
\section{Stability analysis and convergence of \textsc{ScalarFedLQR}}

We now study the stability and convergence properties of \textsc{ScalarFedLQR}. 
For technical reasons that becomes apparent later, our analysis essentially will focus on the $c$ sublevel set defined as
\[\mathcal{S}_c \coloneqq \{\,K: J_{\mathrm{avg}}(K) \leq c\,\},\]
which contains $K_0$ and will be contained in $\mathcal{S}$ given that $Q \succ 0$ and $R\succ0$.
Our goal is to show that, under suitable conditions on the stepsize and the
gradient approximation error, the server-side iterates generated by
Algorithm~\ref{alg:scalarfedlqr_inline} remain within a stabilizing sublevel set
$\mathcal S_c$ of the average cost $J_{\mathrm{avg}}$. Compared with standard FedLQR, the main technical challenge is that the server
does \emph{not} receive full local gradient vectors. Instead, it reconstructs an
aggregated update direction from scalar projections. To analyze the effect of
this approximation, we first quantify how a single iteration of
\textsc{ScalarFedLQR} changes the average cost when the server updates the
policy along the approximate direction $\bar g_t$ rather than the exact gradient
$\nabla J_{\mathrm{avg}}(K_t)$.

Our analysis relays on standard local smoothness and local Polyak--\L{}ojasiewicz (PL) condition of
$J_{\mathrm{avg}}$ that needs to hold only on the sublevel set $\mathcal S_c$---rather than on entire $S$. Subsequently, these conditions will be used to guarantee iterative stability and linear decay rate, respectively.
\begin{assumption}[Local smoothness and local PL condition on $\mathcal S_c$]\label{ass:pl}
There exist constants $L_c>0$ and $\mu_c>0$ such that for all $K\in\mathcal S_c$,
\begin{equation}\label{eq:Lip}
\|\nabla^2 J_{\mathrm{avg}}(K)\|_2
\;\leq\;
L_c,
\end{equation}
\begin{equation}\label{eq:pl}
\frac{1}{2}\|\nabla J_{\mathrm{avg}}(K)\|_2^2
\;\ge\;
\mu_c\bigl(J_{\mathrm{avg}}(K)-J_{\mathrm{avg}}^\star\bigr),
\end{equation}
where $J_{\mathrm{avg}}^\star:=\inf_{K\in\mathcal S_c} J_{\mathrm{avg}}(K)$.
\boxend \end{assumption}
This assumption is known to hold in the absence of heterogeneity \cite{talebi_policy_2026} (i.e., when $\epsilon_1=\epsilon_2=0$ in (\ref{eq::system_bound})) provided that $Q\succ0$ and $R\succ0$. While it is also expected to hold in the heterogeneous case, we defer its detailed analysis to our future work.

Let $\tilde g_t$ denotes the average of the local zeroth-order gradient
estimators, while $g_t$ is the exact gradient of the average cost at the current
policy $K_t$:
\begin{align}\label{eq::zo_diff}
\tilde g_t := \frac{1}{M}\sum\nolimits_{n=1}^{M}\tilde g_{t,n},
\qquad
g_t := \nabla J_{\mathrm{avg}}(K_t).
\end{align}
The discrepancy between the server-side aggregated direction
$\bar g_t$ and the true gradient $g_t$ reflects both the scalar-projection
reconstruction error and the zeroth-order gradient estimation error.

The following result gives a one-step descent guarantee for
$J_{\mathrm{avg}}$ under the scalar-projection aggregated update. In particular, it shows that descent is ensured when the total error is sufficiently small relative to $\|g_t\|_2$ and the stepsize is chosen appropriately.

\begin{lem}[One-step descent for scalar-projection aggregated update]\label{lem:federated_descent}
Fix \(K_t\in\mathcal S_c\) and let \(g_t:=\nabla J_{\mathrm{avg}}(K_t)\in\mathbb R^d\).
Consider the update \(K_{t+1}=K_t-\eta\,\bar g_t\) with \(\bar g_t\) defined in~\eqref{eq:g_bar}.
Assume \(J_{\mathrm{avg}}\) is \(L_c\)-smooth on \(\mathcal S_c\).
If
\[
\|e_t^{\mathrm{tot}}\|_2\le \beta_t\|g_t\|_2
\qquad\text{for some } \beta_t\in[0,1),
\]
then
\begin{align}
J_{\mathrm{avg}}(K_{t+1})
\!\le\!
J_{\mathrm{avg}}(K_t)\!
-\!\eta
\left[
\!(1-\beta_t)-\!\frac{L_c\eta}{2}(1+\beta_t)^2
\!\right]
\!\|g_t\|_2^2 .
\label{eq:one_step_descent_beta}
\end{align}
Consequently, \(J_{\mathrm{avg}}(K_{t+1})\leq J_{\mathrm{avg}}(K_t)\) provided that 
\begin{align}\label{eta_relation_max}
0<\eta< \eta_{\max} \coloneqq \frac{2(1-\beta_t)}{L_c(1+\beta_t)^2}.
\end{align}
In particular, if \(\beta_t\le \beta<1\) uniformly over \(t\), then a sufficient uniform stepsize condition for descent is
$
0<\eta<\frac{2(1-\beta)}{L_c(1+\beta)^2}
$. Furthermore, with the choice of $\eta^\star=\eta_{\max}/2$ we obtain
\begin{align}
J_{\mathrm{avg}}(K_{t+1})
\le
J_{\mathrm{avg}}(K_t)
-
\frac{(1-\beta_t)\eta_{\max}}{4}
\|g_t\|_2^2 .
\label{eq:one_step_descent_beta}
\end{align}
\end{lem}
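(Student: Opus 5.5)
We treat this as a standard inexact-gradient descent argument. First, we fix the meaning of the total error as \(e_t^{\mathrm{tot}} := \bar g_t - g_t\), so that \(\bar g_t = g_t + e_t^{\mathrm{tot}}\). This collects the scalar-projection reconstruction error \(\bar g_t-\tilde g_t\) and the zeroth-order error \(\tilde g_t-g_t\).

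The argument has two parts: a geometric estimate on \(\bar g_t\), then the descent lemma.

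\textbf{Geometric estimate.} By Cauchy--Schwarz and the hypothesis \(\|e_t^{\mathrm{tot}}\|_2\le\beta_t\|g_t\|_2\),
\[
\langle g_t,\bar g_t\rangle = \|g_t\|_2^2+\langle g_t,e_t^{\mathrm{tot}}\rangle \ge (1-\beta_t)\|g_t\|_2^2 ,
\]
and by the triangle inequality \(\|\bar g_t\|_2\le (1+\beta_t)\|g_t\|_2\).

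\textbf{Descent lemma.} Apply the quadratic upper bound implied by \eqref{eq:Lip}:
\[
J_{\mathrm{avg}}(K_{t+1})\le J_{\mathrm{avg}}(K_t)-\eta\langle g_t,\bar g_t\rangle+\tfrac{L_c\eta^2}{2}\|\bar g_t\|_2^2 .
\]
Substituting the two estimates gives \eqref{eq:one_step_descent_beta} directly.

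\textbf{Consequences.} The bracket is positive exactly when \(\eta<\eta_{\max}\), which yields the descent claim. The map \(\beta\mapsto 2(1-\beta)/(L_c(1+\beta)^2)\) is decreasing, so \(\beta_t\le\beta\) gives the uniform stepsize condition. For \(\eta^\star=\eta_{\max}/2\) we have \(\tfrac{L_c\eta^\star}{2}(1+\beta_t)^2=(1-\beta_t)/2\). The bracket therefore equals \((1-\beta_t)/2\), and the decrease is \(\eta^\star(1-\beta_t)/2=(1-\beta_t)\eta_{\max}/4\), as stated.

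\textbf{Main obstacle.} The quadratic upper bound uses the Hessian bound along the whole segment \([K_t,K_{t+1}]\), but \eqref{eq:Lip} holds only on \(\mathcal S_c\). We must therefore show the segment \(K_t-s\eta\bar g_t\), \(s\in[0,1]\), stays in \(\mathcal S_c\). We would use a continuity argument. Let \(\phi(s)=J_{\mathrm{avg}}(K_t-s\eta\bar g_t)\) and let \(\bar s\) be the supremum of those \(s\le 1\) for which the segment up to \(s\) lies in \(\mathcal S_c\). This is well defined because \(\mathcal S_c\subset\mathcal S\), which is open, and \(J_{\mathrm{avg}}\) is smooth there.

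On \([0,\bar s]\), the Taylor bound with step \(s\eta\le\eta<\eta_{\max}\) gives \(\phi(s)\le\phi(0)\le c\), with strict inequality whenever \(g_t\neq0\). Continuity then lets us extend past \(\bar s\) unless \(\bar s=1\). Two technical points remain to check: \(J_{\mathrm{avg}}\) must blow up at the boundary of \(\mathcal S\), which holds since \(Q,R\succ0\), so that \(\mathcal S_c\) is compact and sits in \(\mathcal S\); and the degenerate case \(g_t=0\) is trivial, because then \(\bar g_t=0\).
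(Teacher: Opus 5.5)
Your core argument is the paper's proof. Both start from the quadratic upper bound given by $L_c$-smoothness and substitute $\bar g_t=g_t+e_t^{\mathrm{tot}}$. Both apply Cauchy--Schwarz to the cross term and the triangle inequality to $\|\bar g_t\|_2$, then insert $\|e_t^{\mathrm{tot}}\|_2\le\beta_t\|g_t\|_2$. Your derivation of the stepsize consequences and of the $\eta^\star=\eta_{\max}/2$ decrease is correct. It is also more explicit than the paper, which never spells out the $\eta^\star$ computation.

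Where you genuinely differ is the segment issue. The paper applies the quadratic bound without checking that $[K_t,K_{t+1}]\subset\mathcal S_c$. It instead makes segment containment a standing hypothesis of Theorem~\ref{thm:stability_scalar_fixed_beta}. Your continuity argument derives that containment from $\eta<\eta_{\max}$, compactness of $\mathcal S_c$ and openness of $\mathcal S$. It would therefore let one drop that hypothesis, which is a real strengthening. The cost is that it leans on coercivity of $J_{\mathrm{avg}}$, which the paper asserts but does not prove.

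One step of that argument needs repair. Suppose $J_{\mathrm{avg}}(K_t)=c$, for instance $t=0$ with the natural choice $c=J_{\mathrm{avg}}(K_0)$. Then $\bar s=0$ is possible. Your strict inequality $\phi(s)<\phi(0)$ is only available for $s>0$, and continuity of $\phi$ at $0$ alone does not keep $\phi\le c$ for small $s>0$. Use the derivative instead:
\[
\phi'(0)=-\eta\langle g_t,\bar g_t\rangle\le-\eta(1-\beta_t)\|g_t\|_2^2<0 .
\]
This forces $\phi(s)<c$ on some interval $(0,s_0]$, where $K_t-s\eta\bar g_t\in\mathcal S$ by openness, so $\bar s>0$. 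From there your extension-past-$\bar s$ argument goes through as written.
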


Lemma~\ref{lem:federated_descent} shows that $J_{\mathrm{avg}}$ decreases whenever the total gradient error $e_t^{\mathrm{tot}}$ is sufficiently small relative to $\|g_t\|_2$ and the stepsize satisfies \eqref{eta_relation_max}. The parameter $\beta_t$ therefore reflects a trade-off between stability feasibility and stepsize selection: larger $\beta_t$ makes the error condition easier to satisfy but forces smaller admissible stepsizes, while smaller $\beta_t$ allows more aggressive updates but requires a more accurate aggregated gradient direction. To extend this one-step descent guarantee to global stability, it remains to control $e_t^{\mathrm{tot}}$ in the federated model-free setting. As discussed above, $e_t^{\mathrm{tot}}$ consists of two components: the zeroth-order estimation error $e_t^{\mathrm{ZO}}$, controlled by prior FedLQR analysis, and the scalar-projection reconstruction error $e_t^{\mathrm{proj}}$, which is specific to \textsc{ScalarFedLQR}.

Referring to \eqref{eq::zo_diff}, let $ e_t^{\mathrm{ZO}}:=\tilde g_t-g_t $ and define the event $ \mathcal E_t^{\mathrm{ZO}}:=\{\|e_t^{\mathrm{ZO}}\|_2\le \varepsilon\}.
$ By Lemma~4 of~\cite{wang_model-free_2023}, this event can be ensured with high probability under suitable sampling parameters.
We therefore introduce a bounded gradient
heterogeneity assumption across clients, and then establish high-probability
bounds for the projection error $e_t^{\mathrm{proj}}=\bar g_t-\tilde g_t$. These
results lead directly to the global stability theorem.

\begin{assumption}[Bounded gradient heterogeneity]
\label{ass:heterogeneity}
For each round $t$, let
$
\Delta_{t,n}:=\tilde g_{t,n}-\tilde g_t,
$ and $
\tilde g_t:=\frac{1}{M}\sum_{n=1}^M \tilde g_{t,n}.
$
Assume that there exist nonnegative quantities $\sigma_t$ and $B_t$ such that
\begin{align}
\frac{1}{M}\sum\nolimits_{n=1}^{M}\|\Delta_{t,n}\|_2^2 &\le \sigma_t^2,
\max_{1\le n\le M}\|\Delta_{t,n}\|_2 &\le B_t.
\end{align}
\boxend \end{assumption}
Such bounded gradient heterogeneity (gradient dissimilarity) assumptions are common in
federated optimization~\cite{rostami2023federated}. They quantify how much each
client's local gradient $\tilde g_{t,n}$ can deviate from the round average
$\tilde g_t$.

\begin{lem}[High-probability bound for the projection error $e^{\mathrm{proj}}_t$]
\label{lem:hp_eproj_general_normalized}
Fix a round $t$. For each $n=1,\dots,M$, let
$v_{t,n}\in\{\pm 1/\sqrt d\}^d$ be an i.i.d.\ normalized Rademacher vector, and let
$\tilde g_{t,n}\in\mathbb R^d$ be arbitrary.
Then, under Assumption \ref{ass:heterogeneity} and for any $\delta\in(0,1)$, with probability at least $1-\delta$
\begin{align}\label{eq:hp_eproj_general_normalized}
\|e^{\mathrm{proj}}_t\|_2
&\le
C\sqrt{\frac{(d-1)\log(2d/\delta)}{M}}
\Big(\|\tilde g_t\|_2+\sigma_t\Big)
\nonumber\\
&\quad+
C\,\frac{(d-1)\log(2d/\delta)}{M}\,(\|\tilde g_t\|_2 + B_t),
\end{align}
for an absolute constant $C>0$.
\end{lem}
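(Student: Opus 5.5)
Write $P_n := d\,v_{t,n}v_{t,n}^\top$. Since $v_{t,n}\in\{\pm1/\sqrt d\}^d$, we have $\mathbb E[P_n]=I_d$ and $P_n=I_d+D_n$, where $D_n$ has zero diagonal and off-diagonal entries $\pm1$. By \eqref{eq:g_bar}, $\bar g_t=\frac1M\sum_n P_n\tilde g_{t,n}$, so
\[
e_t^{\mathrm{proj}}=\bar g_t-\tilde g_t=\frac1M\sum\nolimits_{n=1}^M D_n\tilde g_{t,n}
=\frac1M\sum\nolimits_{n=1}^M D_n\tilde g_t+\frac1M\sum\nolimits_{n=1}^M D_n\Delta_{t,n},
\]
using $\tilde g_{t,n}=\tilde g_t+\Delta_{t,n}$ and $\sum_n \Delta_{t,n}=0$ (only the first identity is needed). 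This is a sum of $M$ independent, zero-mean random vectors $X_n:=\frac1M D_n\tilde g_{t,n}$, with $\tilde g_{t,n}$ treated as fixed, i.e.\ conditioned on the ZO sampling, which is independent of the seeds. The plan is to apply the matrix (vector) Bernstein inequality, viewing each $X_n$ as a $d\times1$ matrix. This produces the $\log(2d/\delta)$ factor and the two-term form of the bound: a variance term of order $\sqrt{\cdot}$ and a uniform-bound term of order $(\cdot)/M$.

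\textbf{Variance.} For a fixed vector $g$, $\mathbb E\|D_n g\|_2^2=\mathbb E[g^\top D_n^2 g]$. A direct computation with independent signs gives $\mathbb E[D_n^2]=(d-1)I_d$, since $(D_n^2)_{ii}=d-1$ and the off-diagonal terms have zero mean. Hence
\[
\sum_n\mathbb E\|X_n\|_2^2=\frac{d-1}{M^2}\sum_n\|\tilde g_{t,n}\|_2^2\le\frac{2(d-1)}{M}\bigl(\|\tilde g_t\|_2^2+\sigma_t^2\bigr)
\]
by Assumption~\ref{ass:heterogeneity}. The same value bounds the other variance proxy $\|\sum_n\mathbb E X_nX_n^\top\|_2$, up to constants, because $\mathbb E[D_n g g^\top D_n]\preceq \mathbb E[\|D_ng\|^2]\,I$. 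Taking a square root and using $\sqrt{a^2+b^2}\le a+b$ yields the first term, $\sqrt{(d-1)\log(2d/\delta)/M}\,(\|\tilde g_t\|_2+\sigma_t)$.

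\textbf{Uniform bound (main obstacle).} Bernstein also needs an almost-sure bound $\|X_n\|_2\le R$. Trivially $\|D_n\|_2\le d-1$, which gives $R=(d-1)(\|\tilde g_t\|_2+B_t)/M$ and therefore the second term. The $\log(2d/\delta)$ factor that the statement carries on this term is then harmless, since it only loosens the bound. The difficulty is that the worst case $\|D_n\|_2= d-1$ is wasteful. If a sharper form with $\log$ inside were required, I would first truncate: with high probability $\|D_ng\|_\infty\lesssim\|g\|_2\sqrt{\log(d M/\delta)}$ by Hoeffding, so $\|D_n g\|_2\lesssim\sqrt{d\log(\cdot)}\,\|g\|_2$. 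Then I would apply Bernstein on the truncated event and absorb the failure probability into $\delta$ by adjusting constants. Since the claimed bound only needs $(d-1)\log(2d/\delta)$, the crude bound suffices.

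\textbf{Assembly.} Bernstein gives, with probability $\ge1-\delta$,
\[
\|e_t^{\mathrm{proj}}\|_2\le\sqrt{2V\log((d+1)/\delta)}+\tfrac{2R}{3}\log((d+1)/\delta),
\]
where $V$ is the variance proxy. Bounding $d+1\le 2d$ and collecting the absolute constants into $C$ gives \eqref{eq:hp_eproj_general_normalized}.
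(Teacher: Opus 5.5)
Your proposal is correct and follows essentially the paper's argument: a vector (rectangular matrix) Bernstein inequality built on the same two facts, $\|d\,v_{t,n}v_{t,n}^\top-I\|_2=d-1$ and $\mathbb E\bigl[(d\,v_{t,n}v_{t,n}^\top-I)^2\bigr]=(d-1)I$. The only difference is bookkeeping: you apply Bernstein once to the whole sum $\frac1M\sum_n D_n\tilde g_{t,n}$, using Assumption~\ref{ass:heterogeneity} to bound $\frac1M\sum_n\|\tilde g_{t,n}\|_2^2$ and $\max_n\|\tilde g_{t,n}\|_2$. The paper instead splits $e_t^{\mathrm{proj}}$ into a common part $T_1$ and a heterogeneity part $T_2$, bounds each with probability $1-\delta/2$, and takes a union bound; your version is slightly more economical, and the truncation aside is not needed.
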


We are now in a position to establish the formal stability guarantee for
\textsc{ScalarFedLQR}. Combining the one-step descent result of
Lemma~\ref{lem:federated_descent}, the bounded heterogeneity assumption (Assumption \ref{ass:heterogeneity}), and the
uniform high-probability control of the projection error from
Lemma~\ref{lem:hp_eproj_uniform_T_normalized}, the following theorem shows that
the iterates remain in the stabilizing set $\mathcal S_c$ provided that the
total gradient error is uniformly controlled relative to $\|g_t\|_2$ and the
stepsize is chosen appropriately.

\begin{thm}[Stability of \textsc{ScalarFedLQR}]
\label{thm:stability_scalar_fixed_beta}
Fix a horizon $T\ge 1$. Assume $K_0\in\mathcal S_c$ and that for each round
$t=0,\dots,T-1$, whenever $K_t\in\mathcal S_c$, the update segment
$\{K_t-s\eta\bar g_t: s\in[0,1]\}$ remains in $\mathcal S_c$.
For each round $t$, let $\mathcal E_t^{\mathrm{ZO}}$ denote the zeroth-order
gradient accuracy event under which
$
\|e_t^{\mathrm{ZO}}\|_2\le \varepsilon.
$
Define
$
\mathcal E_T^{\mathrm{ZO}}
:=
\bigcap_{t=0}^{T-1}\mathcal E_t^{\mathrm{ZO}}.
$
Let  $\zeta_t$ denote the right-hand side of
\eqref{eq:hp_eproj_uniformT_normalized} in
Lemma~\ref{lem:hp_eproj_uniform_T_normalized}. Assume there exists a fixed
$\beta\in[0,1)$ such that
\begin{equation}\label{eq:fixed_beta_condition}
\varepsilon+\zeta_t \le \beta \|g_t\|_2,
\qquad \forall t=0,\dots,T-1.
\end{equation}
If, in addition, the stepsize satisfies
\begin{equation}\label{eq:fixed_stepsize_condition}
0<\eta<\frac{2(1-\beta)}{L_c(1+\beta)^2},
\end{equation}
then, on $\mathcal E_T^{\mathrm{ZO}}$, with probability at least $1-\delta$,
\[
J_{\mathrm{avg}}(K_{t+1})\le J_{\mathrm{avg}}(K_t),
\qquad t=0,\dots,T-1.
\]
Consequently, $
K_t\in\mathcal S_c$, for $t=0,\dots,T,$ and hence all iterates up to horizon $T$ are stabilizing.\boxend
\end{thm}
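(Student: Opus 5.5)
The plan is an induction on $t$ that combines the one-step descent of Lemma~\ref{lem:federated_descent} with a union-bounded, high-probability control of the projection error.

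\textbf{Step 1: the good event.} I take Lemma~\ref{lem:hp_eproj_uniform_T_normalized} as the uniform-in-$t$ version of Lemma~\ref{lem:hp_eproj_general_normalized}, with $\delta$ split as $\delta/T$ across rounds. Let $\mathcal E^{\mathrm{proj}}_T$ be the event that $\|e_t^{\mathrm{proj}}\|_2\le\zeta_t$ for all $t=0,\dots,T-1$. Then $\Pr(\mathcal E^{\mathrm{proj}}_T)\ge 1-\delta$.

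One subtlety must be handled here. The directions $v_{t,n}$ are drawn independently of $\tilde g_{t,n}$, whose randomness comes from the rollouts and $K_t$. Lemma~\ref{lem:hp_eproj_general_normalized} allows arbitrary $\tilde g_{t,n}$. Hence I apply it conditionally on the history up to the formation of $\tilde g_{t,n}$, and then take a union bound over $t$. The rest of the argument runs on $\mathcal E_T^{\mathrm{ZO}}\cap\mathcal E^{\mathrm{proj}}_T$.

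\textbf{Step 2: error bound.} Decompose the total error as
\[
e_t^{\mathrm{tot}}=\bar g_t-g_t=e_t^{\mathrm{proj}}+e_t^{\mathrm{ZO}}.
\]
On the good event, the triangle inequality gives $\|e_t^{\mathrm{tot}}\|_2\le\zeta_t+\varepsilon$. By \eqref{eq:fixed_beta_condition}, this is at most $\beta\|g_t\|_2$. So the hypothesis of Lemma~\ref{lem:federated_descent} holds with $\beta_t=\beta$.

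\textbf{Step 3: induction.} The claim is that $K_t\in\mathcal S_c$ and $J_{\mathrm{avg}}(K_t)\le J_{\mathrm{avg}}(K_0)$. The base case $t=0$ is the assumption $K_0\in\mathcal S_c$. Suppose $K_t\in\mathcal S_c$.

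By hypothesis, the segment $\{K_t-s\eta\bar g_t:s\in[0,1]\}$ lies in $\mathcal S_c$. Therefore the Hessian bound \eqref{eq:Lip} holds along the whole segment, and Taylor's theorem with integral remainder yields the $L_c$-smoothness inequality
\[
J_{\mathrm{avg}}(K_{t+1})\le J_{\mathrm{avg}}(K_t)-\eta\langle g_t,\bar g_t\rangle+\tfrac{L_c\eta^2}{2}\|\bar g_t\|_2^2 .
\]
This is exactly what Lemma~\ref{lem:federated_descent} uses. Invoking the lemma and the stepsize condition \eqref{eq:fixed_stepsize_condition}, the bracket in \eqref{eq:one_step_descent_beta} is positive, so $J_{\mathrm{avg}}(K_{t+1})\le J_{\mathrm{avg}}(K_t)\le c$. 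Hence $K_{t+1}\in\mathcal S_c$.

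The endpoint $s=1$ of the segment assumption already gives $K_{t+1}\in\mathcal S_c$. The descent inequality is the substantive content, since it keeps the hypotheses of the next round valid. Finally, $\mathcal S_c\subset\mathcal S$ because $Q,R\succ0$ and $J_{\mathrm{avg}}$ is finite only on stabilizing gains. So all iterates are stabilizing.

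\textbf{Main obstacle.} I expect the main difficulty to be Step 1, namely making the probabilistic statement rigorous. The $\tilde g_{t,n}$ depend on $K_t$, and hence on past projection vectors, so a per-round bound must be applied conditionally and then combined. I would first try a filtration argument: condition on $\mathcal F_t$, generated by all randomness before the draw of $v_{t,\cdot}$, apply the lemma with failure probability $\delta/T$, and sum. A smaller point is that $\zeta_t$ involves $\|\tilde g_t\|_2$, $\sigma_t$ and $B_t$, which are themselves random. Condition \eqref{eq:fixed_beta_condition} should therefore be read as holding on the same event, which the statement implicitly assumes.
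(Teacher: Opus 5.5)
Your proposal is correct and follows the same route as the paper's proof: split $e_t^{\mathrm{tot}}=e_t^{\mathrm{proj}}+e_t^{\mathrm{ZO}}$, bound the first uniformly in $t$ via Lemma~\ref{lem:hp_eproj_uniform_T_normalized} and the second on $\mathcal E_T^{\mathrm{ZO}}$, apply Lemma~\ref{lem:federated_descent} with $\beta_t=\beta$ under \eqref{eq:fixed_stepsize_condition}, and induct to keep $J_{\mathrm{avg}}(K_t)\le J_{\mathrm{avg}}(K_0)\le c$. The paper leaves two details implicit that you handle correctly: the conditioning argument behind the union bound, and the explicit use of the segment hypothesis to justify the $L_c$-smoothness inequality.
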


\begin{proof}
Recall that
\[
e_t^{\mathrm{tot}}
:=
\bar g_t-g_t
=
(\bar g_t-\tilde g_t)+(\tilde g_t-g_t)
=
e_t^{\mathrm{proj}}+e_t^{\mathrm{ZO}} .
\]
By Lemma~\ref{lem:hp_eproj_uniform_T_normalized}, with probability at least
$1-\delta$, we have
\[
\|e_t^{\mathrm{proj}}\|_2\le \zeta_t,
\qquad \forall t=0,\dots,T-1,
\]
where $\zeta_t := C\sqrt{\frac{(d-1)\log\!\bigl(2dT/\delta\bigr)}{M}}
\Big(\|\tilde g_t\|_2+\sigma_t\Big)\quad+
C\,\frac{(d-1)\log\!\bigl(2dT/\delta\bigr)}{M}\,
(\|\tilde g_t\|_2 + B_t)$. Also, on $\mathcal E_T^{\mathrm{ZO}}$,
\[
\|e_t^{\mathrm{ZO}}\|_2\le \varepsilon,
\qquad \forall t=0,\dots,T-1.
\]
Hence, on $\mathcal E_T^{\mathrm{ZO}}$,
\[
\|e_t^{\mathrm{tot}}\|_2
\le
\|e_t^{\mathrm{proj}}\|_2+\|e_t^{\mathrm{ZO}}\|_2
\le
\zeta_t+\varepsilon.
\]

If condition~\eqref{eq:fixed_beta_condition} holds, then
\[
\|e_t^{\mathrm{tot}}\|_2 \le \beta\|g_t\|_2,
\qquad \forall t=0,\dots,T-1,
\]
and by Lemma~\ref{lem:federated_descent} together with
\eqref{eq:fixed_stepsize_condition}, we obtain
\[
J_{\mathrm{avg}}(K_{t+1})\le J_{\mathrm{avg}}(K_t)
\]
on $\mathcal E_T^{\mathrm{ZO}}$ with probability at least $1-\delta$.
Thus
\[
J_{\mathrm{avg}}(K_t)\le J_{\mathrm{avg}}(K_0)\le c
\qquad \forall t=0,\dots,T,
\]
so $K_t\in\mathcal S_c$ for all $t$ by induction, and all iterates are stabilizing.
\boxend\end{proof}

The parameter $\beta$ in Theorem~\ref{thm:stability_scalar_fixed_beta} may be
viewed as a conservative upper bound on the relative total error. In the ideal
homogeneous and exact-gradient regime, where $\varepsilon=0$, $\sigma_t=0$, and
$B_t=0$, we have $e_t^{\mathrm{ZO}}=0$ and $\tilde g_t=g_t$. In this case, a
sufficient choice for satisfying \eqref{eq:fixed_beta_condition} is
\[
\beta \gtrsim
\sqrt{\frac{d\log(2dT/\delta)}{M}}
+
\frac{d\log(2dT/\delta)}{M}.
\]
Thus, when the fleet size $M$ is sufficiently large relative to $d$, the
admissible value of $\beta$ becomes smaller. This, in turn, enlarges the
allowable stepsize range
$
0<\eta<\frac{2(1-\beta)}{L_c(1+\beta)^2},
$
and leads to faster convergence, highlighting a key large-scale advantage of
\textsc{ScalarFedLQR}. This is only a sufficient, and generally conservative, choice of $\beta$ for guaranteeing stability. Since it is derived from a uniform high-probability bound over the horizon $T$, it may overestimate the actual relative error in practice. As a result, the effective error level is often smaller, allowing for even larger stable stepsizes than those predicted by the theorem. The same qualitative dependence persists when the zeroth-order estimation and heterogeneity errors are sufficiently small.

\subsection{Linear convergence under a PL condition}

Theorem~\ref{thm:stability_scalar_fixed_beta} establishes that, under a suitable
uniform control of the total gradient error and an appropriate stepsize choice,
the iterates of \textsc{ScalarFedLQR} remain in the stabilizing set
$\mathcal S_c$ throughout the optimization horizon. Having secured this global
stability guarantee, we now turn to the convergence behavior of the algorithm
within $\mathcal S_c$. In particular, combining the one-step descent result of
Lemma~\ref{lem:federated_descent} with the PL condition on
$J_{\mathrm{avg}}$, we strengthen the stability result to a linear convergence
rate.

\begin{thm}[Linear convergence of \textsc{ScalarFedLQR}]
\label{thm:pl_convergence_scalar}
Under the setup and assumptions of Theorem~\ref{thm:stability_scalar_fixed_beta},
assume in addition that $J_{\mathrm{avg}}$ satisfies the PL condition in
Assumption~\ref{ass:pl} on $\mathcal S_c$ with parameter $\mu_c>0$. Let 
$\eta^\star=\frac{1-\beta}{L_c(1+\beta)^2}.
$
Then, on $\mathcal E_T^{\mathrm{ZO}}$, with probability at least $1-\delta$,
\begin{align*}
J_{\mathrm{avg}}(K_t)-J_{\mathrm{avg}}^\star
\le
\left(
1-\frac{\mu_c(1-\beta)^2}{L_c(1+\beta)^2}
\right)^t
\bigl(J_{\mathrm{avg}}(K_0)-J_{\mathrm{avg}}^\star\bigr).
\end{align*}
Hence, \textsc{ScalarFedLQR} converges linearly to $J_{\mathrm{avg}}^\star$ on $\mathcal S_c$.
\end{thm}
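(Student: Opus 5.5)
The plan is to combine the one-step descent inequality of Lemma~\ref{lem:federated_descent} with the PL inequality \eqref{eq:pl}, and then unroll the resulting recursion. All of this is done on the intersection of $\mathcal E_T^{\mathrm{ZO}}$ with the probability-$(1-\delta)$ event supplied by Lemma~\ref{lem:hp_eproj_uniform_T_normalized}. On this joint event, Theorem~\ref{thm:stability_scalar_fixed_beta} already gives three facts:
\begin{itemize}
\item $K_t\in\mathcal S_c$ for all $t=0,\dots,T$;
\item $\|e_t^{\mathrm{tot}}\|_2\le\beta\|g_t\|_2$ for each $t$;
\item the chosen stepsize $\eta^\star=\frac{1-\beta}{L_c(1+\beta)^2}$ equals $\eta_{\max}/2$ with $\beta_t$ replaced by $\beta$, so it lies in the admissible range \eqref{eq:fixed_stepsize_condition}.
\end{itemize}
Consequently, Lemma~\ref{lem:federated_descent} applies at every round. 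Since $K_t\in\mathcal S_c$, so does the PL condition.

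First I substitute $\eta=\eta^\star$ into the general bound \eqref{eq:one_step_descent_beta}. With $\beta_t\le\beta$, this yields
\[
(1-\beta)-\tfrac{L_c\eta^\star}{2}(1+\beta)^2=\tfrac{1-\beta}{2},
\]
and hence
\[
J_{\mathrm{avg}}(K_{t+1})\le J_{\mathrm{avg}}(K_t)-\frac{(1-\beta)^2}{2L_c(1+\beta)^2}\|g_t\|_2^2 .
\]
This is the same as the second form of the lemma with $\eta_{\max}$ evaluated at $\beta$. Some care is needed here, because the lemma is stated with $\beta_t$. I would either take $\beta_t=\beta$, which is legitimate since the hypothesis $\|e^{\mathrm{tot}}_t\|\le\beta\|g_t\|$ holds with $\beta$ itself, or observe directly that the bracket is decreasing in $\beta_t$.

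Next I apply PL in the form $\|g_t\|_2^2\ge 2\mu_c(J_{\mathrm{avg}}(K_t)-J_{\mathrm{avg}}^\star)$. I then subtract $J_{\mathrm{avg}}^\star$ from both sides of the descent inequality, which gives the contraction
\[
J_{\mathrm{avg}}(K_{t+1})-J^\star_{\mathrm{avg}}\le\Bigl(1-\tfrac{\mu_c(1-\beta)^2}{L_c(1+\beta)^2}\Bigr)\bigl(J_{\mathrm{avg}}(K_t)-J^\star_{\mathrm{avg}}\bigr).
\]
Induction over $t=0,\dots,T$ then gives the claimed bound. I should also note that the contraction factor lies in $[0,1)$. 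This follows from the standard inequality $\mu_c\le L_c$, which holds for PL and smooth functions: combine PL with the smoothness-based bound $\|g\|^2\le 2L_c(J-J^\star)$, which is valid on the sublevel set under the segment assumption. Together with $(1-\beta)^2/(1+\beta)^2\le1$, this keeps the factor nonnegative.

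The main obstacle is bookkeeping rather than analysis. The probability statement must be uniform in $t$, so that a single event of probability at least $1-\delta$ covers all rounds; the uniform-in-$T$ lemma provides exactly this. In addition, $J^\star_{\mathrm{avg}}$ is defined as an infimum over $\mathcal S_c$, so membership $K_t\in\mathcal S_c$, which Theorem~\ref{thm:stability_scalar_fixed_beta} provides, is needed both to invoke PL and to ensure $J_{\mathrm{avg}}(K_t)-J^\star_{\mathrm{avg}}\ge0$.
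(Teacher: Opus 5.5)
Your proposal is correct and is essentially the paper's proof: on the joint event you get $\|e_t^{\mathrm{tot}}\|_2\le\beta\|g_t\|_2$, apply Lemma~\ref{lem:federated_descent} with $\beta_t=\beta$ and $\eta^\star=\eta_{\max}/2$ to obtain a decrease of $\frac{(1-\beta)^2}{2L_c(1+\beta)^2}\|g_t\|_2^2$, invoke PL, and unroll the contraction. The one thing to drop is your side argument that $\rho:=1-\frac{\mu_c(1-\beta)^2}{L_c(1+\beta)^2}\ge 0$ via $\mu_c\le L_c$: the segment hypothesis of Theorem~\ref{thm:stability_scalar_fixed_beta} concerns $K_t-s\eta\bar g_t$, not $K-s\,g/L_c$, and the argument is unnecessary anyway, because $0\le J_{\mathrm{avg}}(K_1)-J_{\mathrm{avg}}^\star\le\rho\,\bigl(J_{\mathrm{avg}}(K_0)-J_{\mathrm{avg}}^\star\bigr)$ already forces $\rho\ge0$ when $J_{\mathrm{avg}}(K_0)>J_{\mathrm{avg}}^\star$, and all gaps vanish otherwise.
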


\begin{proof}
By Lemma~\ref{lem:hp_eproj_uniform_T_normalized}, with probability at least
$1-\delta$, we have $
\|e_t^{\mathrm{proj}}\|_2\le \zeta_t$, for $ \forall t=0,\dots,T-1.$
Also, on $\mathcal E_T^{\mathrm{ZO}}$,
\[
\|e_t^{\mathrm{ZO}}\|_2\le \varepsilon,
\qquad \forall t=0,\dots,T-1.
\]
Hence,
$
\|e_t^{\mathrm{tot}}\|_2
\le
\|e_t^{\mathrm{proj}}\|_2+\|e_t^{\mathrm{ZO}}\|_2
\le
\zeta_t+\varepsilon
\le
\beta\|g_t\|_2,
$
for all $t=0,\dots,T-1$. Therefore, Lemma~\ref{lem:federated_descent} applies with $\beta_t=\beta$. With
the choice
\[
\eta^\star=\frac{\eta_{\max}}{2}
=\frac{1-\beta}{L_c(1+\beta)^2},
\]
where $\eta_{\max}$ is defined in \eqref{eta_relation_max}. By Lemma~\ref{lem:federated_descent}, \eqref{eq:one_step_descent_beta} holds. Substituting the expression of $\eta_{\max}$ from \eqref{eta_relation_max} into \eqref{eq:one_step_descent_beta} yields
\begin{align}
J_{\mathrm{avg}}(K_{t+1})
&\le
J_{\mathrm{avg}}(K_t)
-\frac{(1-\beta)^2}{2L_c(1+\beta)^2}\|g_t\|_2^2.
\label{eq:pl_pf_descent_beta}
\end{align}
Subtracting $J_{\mathrm{avg}}^\star$ from both sides and invoking the PL condition (Assumption~\ref{ass:pl}) yields
$
\|g_t\|_2^2
\ge
2\mu_c\bigl(J_{\mathrm{avg}}(K_t)-J_{\mathrm{avg}}^\star\bigr),
$
hence, we obtain
\begin{align*}
J_{\mathrm{avg}}(K_{t+1})-J_{\mathrm{avg}}^\star
&\le
J_{\mathrm{avg}}(K_t)-J_{\mathrm{avg}}^\star
-\frac{(1-\beta)^2}{2L_c(1+\beta)^2}\|g_t\|_2^2 \nonumber\\
&\le
\left(
1-\frac{\mu_c(1-\beta)^2}{L_c(1+\beta)^2}
\right)
\bigl(J_{\mathrm{avg}}(K_t)-J_{\mathrm{avg}}^\star\bigr).
\end{align*}
Iterating this recursion yields the last linear rate. 
\boxend\end{proof}

The dependence of the above convergence rate on $M$ and $d$ can be understood through the preceding stability discussion. In particular, the admissible relative error parameter $\beta$ is governed, up to logarithmic factors, by the ratio $d/M$ and by the accuracy of the local zeroth-order gradient estimators. Hence, larger $M$ relative to $d$, as well as more accurate local estimators, both reduce the effective error level and allow the stability condition to be met with a smaller admissible $\beta$. This in turn enlarges the admissible stepsize range and improves the linear convergence rate. Conversely, when $d$ is large relative to $M$, or when the local estimators are noisy, a larger effective $\beta$ is needed, yielding smaller stepsizes and therefore more conservative convergence.

\section{Numerical Results}
\label{sec:numerical}

This section evaluates the performance of \textsc{ScalarFedLQR} in the model-free federated LQR setting. 
To ensure a fair and direct comparison, we adopt the same numerical setup and system-generation procedure as in the \FedLQR framework \cite{wang_model-free_2023}. 
In particular, all experiments use identical system dynamics, heterogeneity construction, and cost matrices, with the only difference being the learning algorithm and communication mechanism.

\begin{figure}[t]
\centering
    \includegraphics[scale=0.25]{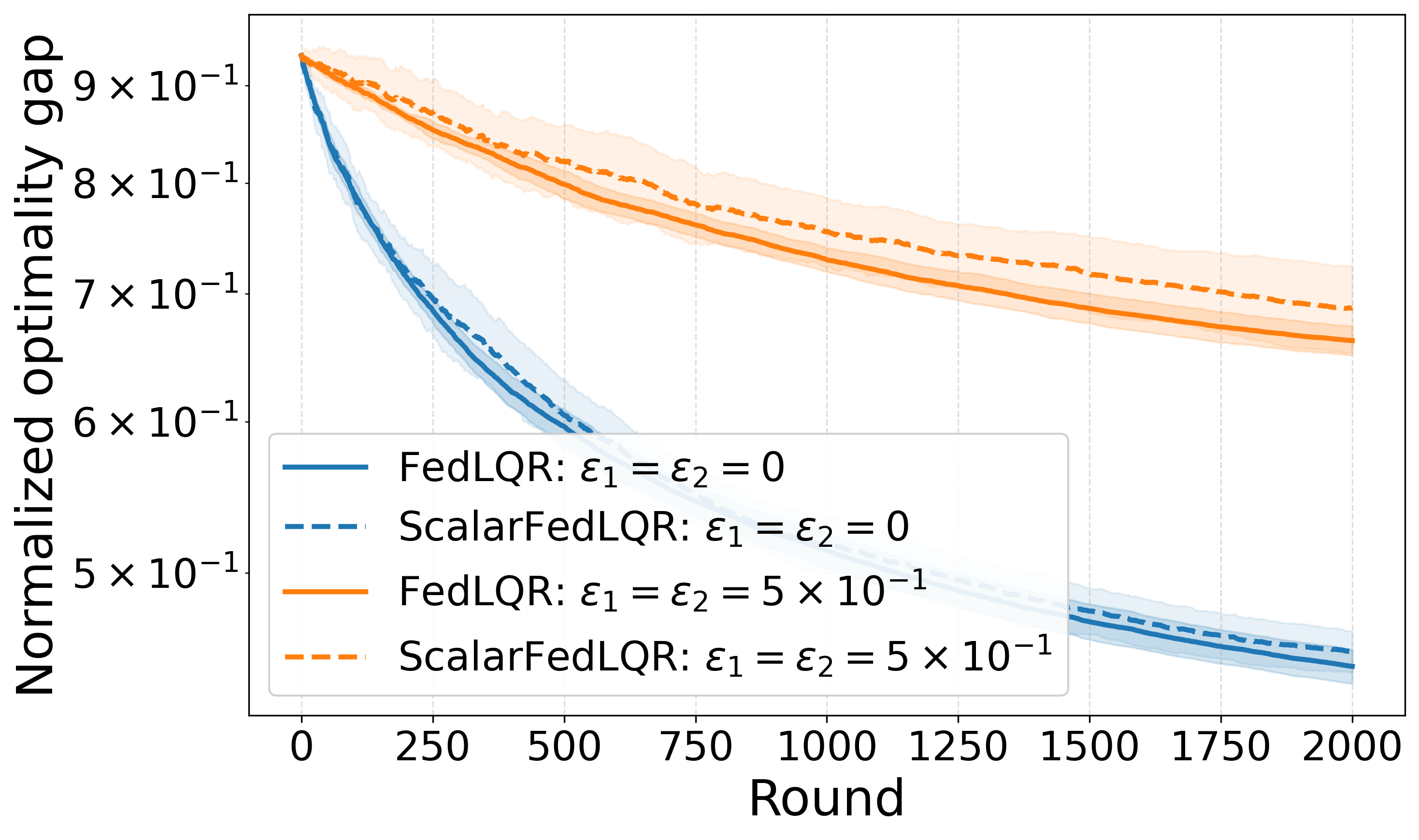}
    \caption{\small Normalized optimality gap versus rounds for \FedLQR and \textsc{ScalarFedLQR} under two heterogeneity levels.}
    \label{fig:gap_vs_round}
    \vspace{-0.6cm}
\end{figure}

\subsubsection{System Generation}
We consider a collection of $M$ heterogeneous discrete-time linear time-invariant
(LTI) systems of the form~\eqref{eq:dynamics}
where each system has state dimension $n_x=3$ and input dimension $n_u=3$.
Following the construction in~\cite{wang_model-free_2023}, the agent dynamics
are generated to satisfy a bounded heterogeneity condition. Specifically, a
nominal pair $(A_0,B_0)$ is fixed, from which the heterogeneous agent dynamics are generated by
structured perturbation:
\begin{equation}
A^{(n)} = A_0 + \gamma_1^{(n)} Z_1,
\qquad
B^{(n)} = B_0 + \gamma_2^{(n)} Z_2,
\end{equation}
where $Z_1,Z_2\in\mathbb R^{3\times 3}$ are fixed modification masks, and
$\gamma_1^{(n)}\sim \mathcal U(0,\epsilon_1)$ and
$\gamma_2^{(n)}\sim \mathcal U(0,\epsilon_2)$ are random perturbation levels.
The parameters $\epsilon_1,\epsilon_2>0$ control the degree of heterogeneity
across agents in~\eqref{eq::system_bound}. The nominal system itself is included in the population by
setting $(A^{(1)},B^{(1)})=(A_0,B_0)$.

The nominal system and cost matrices are given by
\begin{equation}
A_0 =
\begin{bmatrix}
1.20 & 0.50 & 0.40 \\
0.01 & 0.75 & 0.30 \\
0.10 & 0.02 & 1.50
\end{bmatrix},
\qquad
B_0 = I_3,
\end{equation}
and
\(Q = 2I_3,
R = \tfrac{1}{2} I_3.\)
Throughout our simulations, we consider the following initial stabilizing controller $K_0 = 1.62 I_3$.
This control gain is used only to initialize a stabilizing policy for both algorithms, and is ensured to be stabilizing for all agents provided that $\epsilon_1$ and $\epsilon_2$ are small enough.

\begin{figure}[t]
    \begin{subfigure}{0.48\textwidth}
    \centering
    \includegraphics[scale=0.25]{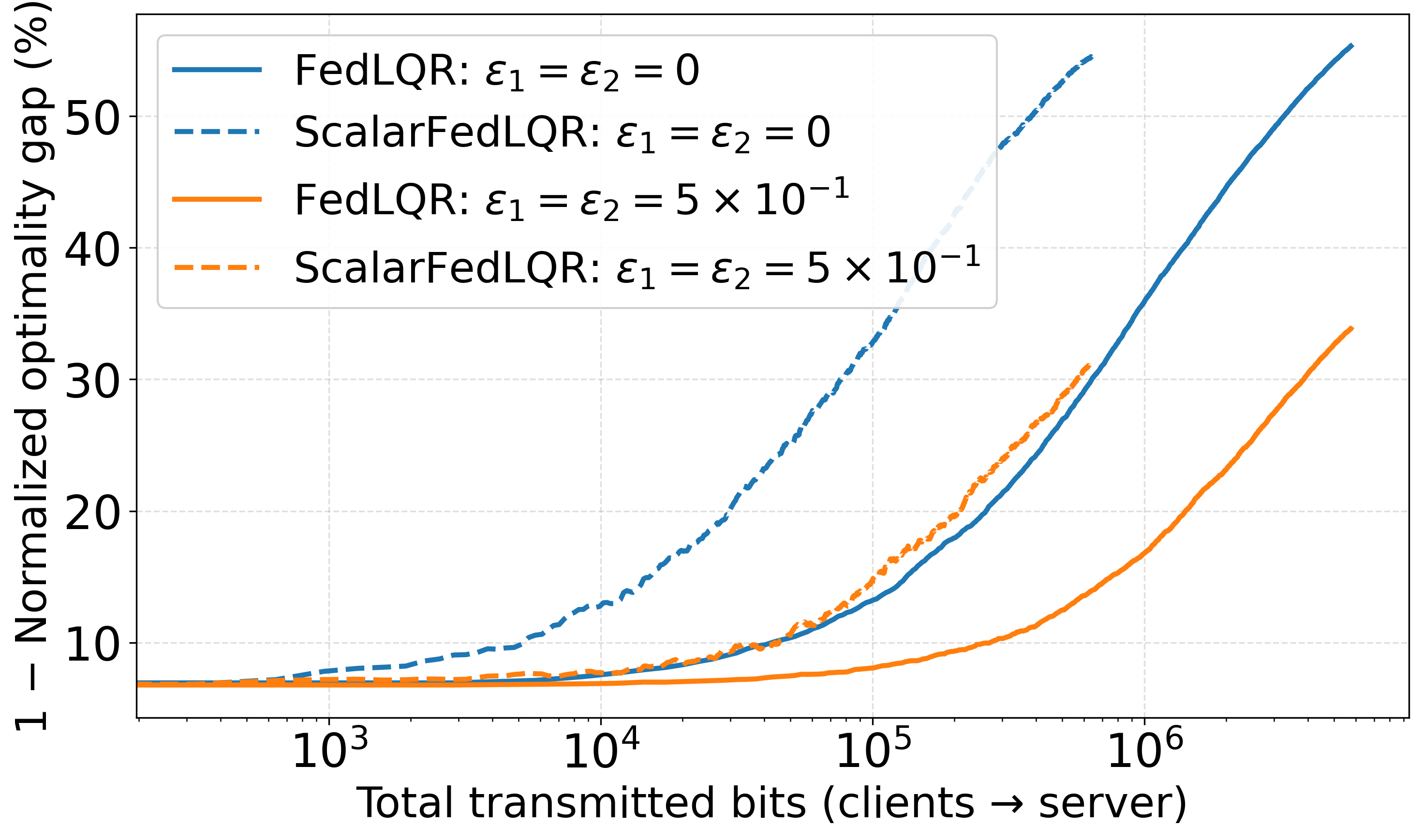}
    \caption{}
    \label{fig:recovery_vs_bits_all}
    \end{subfigure}
    \begin{subfigure}{0.48\textwidth}
    \centering
      \includegraphics[scale=0.25]{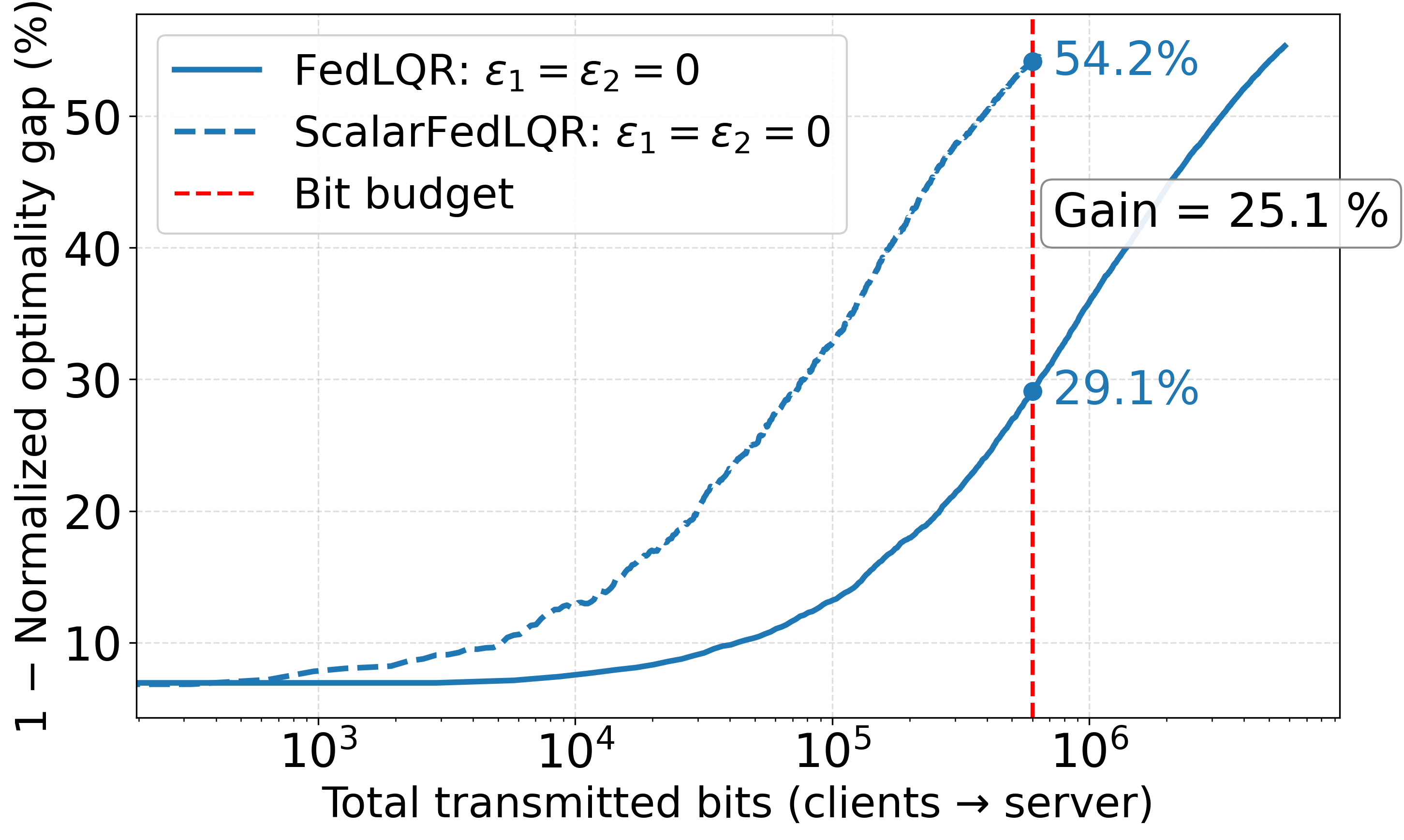}
     \caption{}
     \label{fig:budget_low_het}
    \end{subfigure}
    \begin{subfigure}{0.48\textwidth}
    \centering
        \includegraphics[scale=0.25]{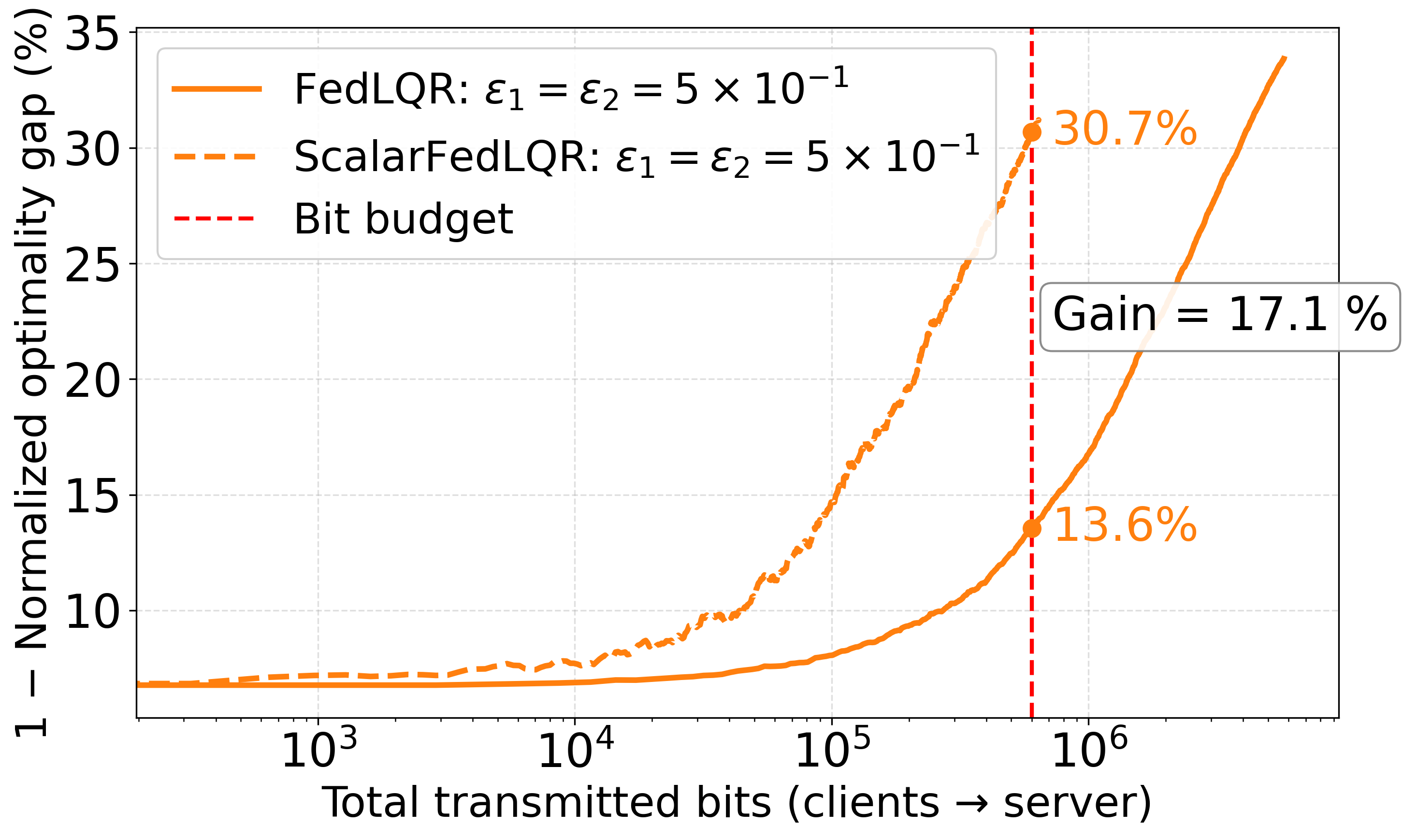}
    \caption{}
    \label{fig:budget_high_het}
    \end{subfigure}
    \caption{\small Recovery percentage ($1-\text{normalized optimality gap}$) versus total transmitted bits (a) under two heterogeneity levels; (b) for $\epsilon_1=\epsilon_2=0$ with a fixed bit budget of $6\times10^5$; (c) for $\epsilon_1=\epsilon_2=5\times10^{-1}$ with a fixed bit budget of $6\times10^5$. }
    \vspace{-0.6cm}
\end{figure}

\subsubsection{Experimental Protocol}
\label{sec:experimental_protocol}
All agents perform model-free  PO using local trajectory
rollouts to estimate policy gradients. Unless otherwise stated, all methods are
evaluated under the same simulation and sampling configuration as in
\cite{wang_model-free_2023}. In particular, the number of agents is fixed to
$M=10$, the total number of communication rounds is $T=2000$, and the stepsize
is set to $\eta=0.01$ for both algorithms. For the zeroth-order gradient estimator, each local
gradient estimate is computed using $N_{\mathrm{traj}}=5$ trajectories, rollout
length $\tau=15$, and smoothing radius $r=0.1$. Each reported curve is averaged
over $10$ independent Monte Carlo runs.

To quantify communication, we count each transmitted scalar as a $32$-bit
floating-point number. Therefore, FedLQR incurs an uplink cost proportional to the full gradient dimension (here $n_u \times n_x = 9$), whereas \textsc{ScalarFedLQR} incurs only scalar-level communication per agent per round. 

To assess the performance of the algorithms, we consider the \emph{normalized optimality gap}
\(
{[J_{\mathrm{avg}}(K)-J_{\mathrm{avg}}(K_{\mathrm{avg}}^\star)]}/
     {J_{\mathrm{avg}}(K_{\mathrm{avg}}^\star)}
\)
versus iteration rounds as a measure of convergence. In practice, however, the population-optimal common controller $K_{\mathrm{avg}}^\star$ is generally hard
to compute exactly for a heterogeneous collection of systems. Therefore, in the
numerical experiments we instead use the optimal controller of the nominal
system (equivalently, agent $1$), denoted by $K_1^\star$, as a feasible
reference policy, and we report the normalized reference gap
$
{[J_1(K)-J_1(K_1^\star)]}/{J_1(K_1^\star)}.
$
Explicitly, $K_1^*$ is obtained by solving the discrete algebraic Riccati equation (DARE) with parameter $(A_0,B_0,Q,R)$.
Thus, the plotted quantity measures suboptimality relative to the nominal-agent optimum rather than the exact population-average optimum.

\subsubsection{Results}
Figure \ref{fig:gap_vs_round} illustrates taht \textsc{ScalarFedLQR} achieves performance comparable to \FedLQR in terms of normalized optimality gap versus communication rounds. In both heterogeneity setting, two methods exhibit similar convergence trends, indication that the scalar projection aggregation preserves the essential learning behavior of full-gradient federated policy optimization. At the same time, the figure also shows the expected effect of heterogeneity, which is when $\epsilon_1 = \epsilon_2 =0$, both methods converge faster and attain a lower final optimality gap than in the more heterogeneous setting $\epsilon_1=\epsilon_2=0.5$ as expected.

In order to also compare the communication requirement, we look at the percentages of cost improvement versus total number of bits transferred by each algorithm. The main advantage of \textsc{ScalarFedLQR} appears when performance is measured against communication cost. As shown in
Fig.~\ref{fig:recovery_vs_bits_all}, for a fixed number of transmitted bits, \textsc{ScalarFedLQR} consistently attains a higher recovery percentage $1-\text{normalized optimality gap}$ than \FedLQR$.$ This reflects the fact that \textsc{ScalarFedLQR} replaces full uplink gradient transmission by scalar communication, thereby using the available communication budget much more efficiently. The benefit is especially clear at the fixed budget of $6\times10^5$ bits. In the low heterogeneity setting, Fig.~\ref{fig:budget_low_het} shows that \textsc{ScalarFedLQR} achieves $54.2\%$ recovery, compared with $29.1\%$ for \FedLQR, which corresponds to a gain of $25.1$ percentage points. In the higher-heterogeneity setting, Fig.~\ref{fig:budget_high_het} shows that \textsc{ScalarFedLQR} still achieves $30.7\%$ recovery, compared with $13.6\%$ for \FedLQR, corresponding to a gain of $17.1$ percentage points.

Overall, these results show that \textsc{ScalarFedLQR} preserves performance comparable to \FedLQR when measured by communication rounds, while yielding a substantial reduction in communication cost and significantly higher recovery under a fixed bit budget regardless of heterogeneity levels. Finally, although the present experiments keep the model-free oracle fixed across methods, the results can be further improved by strengthening the zeroth-order gradient estimates, for example by increasing the number of rollouts or the trajectory length. We emphasize, however, that this is not the main focus of the paper; the central claim is the communication-efficiency gain achieved by scalar
uplink transmission. The simulation code is available online \cite{github_code}.

\section{Conclusion}
\label{sec:conclusion}

We introduced \textsc{ScalarFedLQR}, a communication-efficient federated algorithm
for model-free linear quadratic regulator (LQR) control with heterogeneous
agents. By replacing full policy-gradient transmission with a single scalar
projection of each local zeroth-order gradient estimate, the proposed method
reduces the per-agent uplink communication cost from $\mathcal{O}(d)$ to
$\mathcal{O}(1)$, independently of the policy dimension. We showed that the
aggregated scalar-projection update defines a valid descent direction whose
approximation error improves with the number of participating agents. Under
standard stability and regularity conditions, we established that all iterates
remain stabilizing and that \textsc{ScalarFedLQR} converges linearly to the
optimal average policy. Numerical experiments further confirmed that the
proposed method achieves performance comparable to full-gradient federated LQR
while significantly reducing communication. Future work includes sharpening the convergence analysis of \textsc{ScalarFedLQR} under more general heterogeneity and oracle conditions while preserving its communication-efficiency advantages.

\bibliographystyle{ieeetr}
\bibliography{references, bib/Reference, bib/alias}

\begin{thebibliography}{10}

\bibitem{hu_toward_2023}
B.~Hu, K.~Zhang, N.~Li, M.~Mesbahi, M.~Fazel, and T.~Başar, ``Toward a
  {Theoretical} {Foundation} of {Policy} {Optimization} for {Learning}
  {Control} {Policies},'' {\em Annual Review of Control, Robotics, and
  Autonomous Systems}, vol.~6, pp.~123--158, May 2023.

\bibitem{talebi_policy_2026}
S.~Talebi, Y.~Zheng, S.~Kraisler, N.~Li, and M.~Mesbahi, ``Policy
  {Optimization} in {Control}: {Geometry} and {Algorithmic} {Implications},''
  in {\em Encyclopedia of {Systems} and {Control} {Engineering} ({First}
  {Edition})} (Z.~Ding, ed.), pp.~39--61, Oxford: Elsevier, first edition~ed.,
  2026.

\bibitem{fazel_global_2018}
M.~Fazel, R.~Ge, S.~Kakade, and M.~Mesbahi, ``Global convergence of policy
  gradient methods for the linear quadratic regulator,'' in {\em Int. {Conf}.
  on {Machine} {Learning}}, pp.~1467--1476, PMLR, July 2018.

\bibitem{bu_lqr_2019}
J.~Bu, A.~Mesbahi, M.~Fazel, and M.~Mesbahi, ``{LQR} through the {Lens} of
  {First} {Order} {Methods}: {Discrete}-time {Case},'' July 2019.
\newblock arXiv:1907.08921 [cs, eess, math].

\bibitem{mcmahan_communication-efficient_2017}
H.~B. McMahan, E.~Moore, D.~Ramage, S.~Hampson, and B.~A. y~Arcas,
  ``Communication-efficient learning of deep networks from decentralized
  data,'' in {\em Proceedings of the 20th International Conference on
  Artificial Intelligence and Statistics (AISTATS)}, pp.~1273--1282, 2017.

\bibitem{konevcny_federated_2016}
J.~Kone{\v{c}}n{\`y}, H.~B. McMahan, F.~X. Yu, P.~Richt{\'a}rik, A.~T. Suresh,
  and D.~Bacon, ``Federated learning: Strategies for improving communication
  efficiency,'' {\em arXiv preprint arXiv:1610.05492}, 2016.

\bibitem{dean_sample_2020}
S.~Dean, H.~Mania, N.~Matni, B.~Recht, and S.~Tu, ``On the sample complexity of
  the linear quadratic regulator,'' {\em Foundations of Computational
  Mathematics}, vol.~20, no.~4, pp.~633--679, 2020.

\bibitem{alemzadeh_data-driven_2024}
S.~Alemzadeh, S.~Talebi, and M.~Mesbahi, ``Data-{Driven} {Structured} {Policy}
  {Iteration} for {Homogeneous} {Distributed} {Systems},'' {\em IEEE
  Transactions on Automatic Control}, vol.~69, pp.~5979--5994, Sept. 2024.

\bibitem{wang_model-free_2023}
H.~Wang, L.~F. Toso, A.~Mitra, and J.~Anderson, ``Model-free {Learning} with
  {Heterogeneous} {Dynamical} {Systems}: {A} {Federated} {LQR} {Approach},''
  Aug. 2023.
\newblock arXiv:2308.11743 [math].

\bibitem{malik_derivative-free_2019}
D.~Malik, A.~Pananjady, K.~Bhatia, K.~Kandasamy, P.~L. Bartlett, and M.~J.
  Wainwright, ``Derivative-free methods for policy optimization: Guarantees for
  linear quadratic systems,'' in {\em Proceedings of the 22nd International
  Conference on Artificial Intelligence and Statistics (AISTATS)},
  pp.~2916--2925, PMLR, 2019.

\bibitem{dulac-arnold_challenges_2019}
G.~Dulac-Arnold, D.~Mankowitz, and T.~Hester, ``Challenges of real-world
  reinforcement learning,'' 2019.
\newblock arXiv:1904.12901 [cs.LG].

\bibitem{mohammadi_learning_2020}
H.~Mohammadi, M.~R. Jovanovic, and M.~Soltanolkotabi, ``Learning the model-free
  linear quadratic regulator via random search,'' in {\em Learning for
  {Dynamics} and {Control}}, pp.~531--539, PMLR, 2020.

\bibitem{zhao_global_2023}
F.~Zhao, K.~You, and T.~Başar, ``Global convergence of policy gradient
  primal–dual methods for risk-constrained {LQRs},'' {\em IEEE Transactions
  on Automatic Control}, vol.~68, no.~5, pp.~2934--2949, 2023.

\bibitem{talebi_data-driven_2023}
S.~Talebi, A.~Taghvaei, and M.~Mesbahi, ``Data-driven {Optimal} {Filtering} for
  {Linear} {Systems} with {Unknown} {Noise} {Covariances},'' in {\em Advances
  in {Neural} {Information} {Processing} {Systems} ({NeurIPS})}, vol.~36,
  pp.~69546--69585, Curran Associates, Inc., 2023.

\bibitem{umenberger_globally_2022}
J.~Umenberger, M.~Simchowitz, J.~Perdomo, K.~Zhang, and R.~Tedrake, ``Globally
  {Convergent} {Policy} {Search} for {Output} {Estimation},'' in {\em Advances
  in {Neural} {Information} {Processing} {Systems}} (S.~Koyejo, S.~Mohamed,
  A.~Agarwal, D.~Belgrave, K.~Cho, and A.~Oh, eds.), vol.~35, pp.~22778--22790,
  Curran Associates, Inc., 2022.

\bibitem{qi_federated_2021}
J.~Qi, Q.~Zhou, L.~Lei, and K.~Zheng, ``Federated {Reinforcement} {Learning}:
  {Techniques}, {Applications}, and {Open} {Challenges},'' {\em Intelligence \&
  Robotics}, 2021.
\newblock arXiv:2108.11887 [cs].

\bibitem{zhu_deep_2019}
L.~Zhu, Z.~Liu, and S.~Han, ``Deep leakage from gradients,'' in {\em Advances
  in Neural Information Processing Systems (NeurIPS)}, vol.~32,
  pp.~14774--14784, Curran Associates, Inc., 2019.

\bibitem{fournier2023can}
L.~Fournier, S.~Rivaud, E.~Belilovsky, M.~Eickenberg, and E.~Oyallon, ``Can
  forward gradient match backpropagation?,'' in {\em International Conference
  on Machine Learning}, pp.~10249--10264.

\bibitem{silver2021learning}
D.~Silver, A.~Goyal, I.~Danihelka, M.~Hessel, and H.~van Hasselt, ``Learning by
  directional gradient descent,'' in {\em International Conference on Learning
  Representations}.

\bibitem{rostami2024projected}
M.~Rostami and S.~S. Kia, ``Projected forward gradient-guided frank-wolfe
  algorithm via variance reduction,'' {\em IEEE Control Systems Letters},
  vol.~8, pp.~3153--3158, 2024.

\bibitem{nesterov2017random}
Y.~Nesterov and V.~Spokoiny, ``Random gradient-free minimization of convex
  functions,'' {\em Foundations of Computational Mathematics}, vol.~17, no.~2,
  pp.~527--566, 2017.

\bibitem{rostami2023federated}
M.~Rostami and S.~S. Kia, ``Federated learning using variance reduced
  stochastic gradient for probabilistically activated agents,'' 2023.

\bibitem{github_code}
M.~Rostami, S.~Talebi, and S.~S. Kia, ``Scalar federated learning for linear
  quadratic regulator,'' 2026.
\newblock https://github.com/RostamiHub/ScalarFedLQR.

\bibitem{tropp2012user}
J.~A. Tropp, ``User-friendly tail bounds for sums of random matrices,'' {\em
  Foundations of computational mathematics}, vol.~12, no.~4, pp.~389--434,
  2012.

\bibitem{tropp2015introduction}
J.~A. Tropp, ``An introduction to matrix concentration inequalities,'' {\em
  Foundations and trends{\textregistered} in machine learning}, vol.~8,
  no.~1-2, pp.~1--230, 2015.

\end{thebibliography}

\appendices

\section{Technical Details}

\begin{proof}[Proof of Lemma~\ref{lem:federated_descent}]
Since \(J_{\mathrm{avg}}\) is \(L_c\)-smooth on \(\mathcal S_c\), we have
\begin{align*}
J_{\mathrm{avg}}(K_{t+1})
\le
J_{\mathrm{avg}}(K_t)
+\big\langle \nabla J_{\mathrm{avg}}(K_t),\, K_{t+1}-K_t \big\rangle\nonumber\\
+\frac{L_c}{2}\|K_{t+1}-K_t\|_2^2 .
\end{align*}
Using \(g_t=\nabla J_{\mathrm{avg}}(K_t)\) and the update
$
K_{t+1}=K_t-\eta\,\bar g_t,
$
it follows that
$
K_{t+1}-K_t=-\eta\,\bar g_t .
$
Substituting into the last inequality gives
\begin{align}
J_{\mathrm{avg}}(K_{t+1})
\le
J_{\mathrm{avg}}(K_t)
-\eta \langle g_t,\bar g_t\rangle
+\frac{L_c\eta^2}{2}\|\bar g_t\|_2^2 .
\label{eq:descent_pf_2}
\end{align}
Now, recall the scalar-projection aggregation rule of \(\bar g_t\) as in (\ref{eq:g_bar}) and  the local model-free gradient estimation of $n$-th agent \(\tilde g_{t,n}\) as in (\ref{eq::g_t_n}).
Define the total error as
\[
e_t^{\mathrm{tot}}
:= \bar g_t - g_t
= (\bar g_t - \tilde g_t) + (\tilde g_t - g_t)
=: e_t^{\mathrm{proj}} + e_t^{\mathrm{ZO}}.
\]

Hence,
$
\bar g_t = g_t + e_t^{\mathrm{tot}} .
$
Substituting this identity into the inner-product term in \eqref{eq:descent_pf_2}, we obtain
\begin{align}
\langle g_t,\bar g_t\rangle
=
\langle g_t, g_t+e_t^{\mathrm{tot}}\rangle
=
\|g_t\|_2^2 + \langle g_t,e_t^{\mathrm{tot}}\rangle .
\label{eq:descent_pf_3}
\end{align}
Similarly,
\begin{align}
\|\bar g_t\|_2^2
=
\|g_t+e_t^{\mathrm{tot}}\|_2^2 .
\label{eq:descent_pf_4}
\end{align}
Plugging \eqref{eq:descent_pf_3} and \eqref{eq:descent_pf_4} into \eqref{eq:descent_pf_2} yields
\begin{align}
J_{\mathrm{avg}}(K_{t+1})
\le
J_{\mathrm{avg}}(K_t)
-\eta \|g_t\|_2^2
-\eta \langle g_t,e_t^{\mathrm{tot}}\rangle
\nonumber\\
+\frac{L_c\eta^2}{2}\|g_t+e_t^{\mathrm{tot}}\|_2^2 .
\label{eq:descent_pf_5}
\end{align}

Using Cauchy--Schwarz, we have 
$
-\langle g_t,e_t^{\mathrm{tot}}\rangle
\le
\|g_t\|_2\,\|e_t^{\mathrm{tot}}\|_2 ,
$
and by the triangle inequality, we have 
$
\|g_t+e_t^{\mathrm{tot}}\|_2
\le
\|g_t\|_2+\|e_t^{\mathrm{tot}}\|_2 .
$
Therefore, from \eqref{eq:descent_pf_5},
\begin{align}
J_{\mathrm{avg}}(K_{t+1})
\le
J_{\mathrm{avg}}(K_t)
-\eta \|g_t\|_2^2
+\eta \|g_t\|_2\,\|e_t^{\mathrm{tot}}\|_2
\nonumber\\+\frac{L_c\eta^2}{2}\bigl(\|g_t\|_2+\|e_t^{\mathrm{tot}}\|_2\bigr)^2 .
\label{eq:descent_pf_6}
\end{align}

Now suppose that
$
\|e_t^{\mathrm{tot}}\|_2 \le \beta_t \|g_t\|_2 .
$
Then
$
\eta \|g_t\|_2\,\|e_t^{\mathrm{tot}}\|_2
\le
\eta \beta_t \|g_t\|_2^2,
$
and
$
\bigl(\|g_t\|_2+\|e_t^{\mathrm{tot}}\|_2\bigr)^2
\le
(1+\beta_t)^2 \|g_t\|_2^2.
$
Substituting these bounds into \eqref{eq:descent_pf_6}, we arrive at
\begin{align}
J_{\mathrm{avg}}(K_{t+1})
\!\!\le\!\!
J_{\mathrm{avg}}(K_t)
\!-\!\eta
\left[
(1-\beta_t)-\frac{L_c\eta}{2}(1+\beta_t)^2
\right]
\!\!\|g_t\|_2^2 .
\end{align}
Provided $0<\eta<\frac{2(1-\beta_t)}{L_c(1+\beta_t)^2}$, This proves the desired one-step descent bound.
\boxend\end{proof}

\begin{proof}[Proof of Lemma~\ref{lem:hp_eproj_general_normalized}]
Let $\Delta_{t,n}:=\tilde g_{t,n}-\tilde g_t$, so that $\sum_{n=1}^{M}\Delta_{t,n}=0$. Then
\begin{align}
e^{\mathrm{proj}}_t
\!\!=\!\!
\underbrace{\Big(\frac{1}{M}\sum\nolimits_{n=1}^{M}(d\,v_{t,n}v_{t,n}^\top-I)\Big)\tilde g_t}_{T_1}
+\nonumber\\
\underbrace{\frac{1}{M}\sum\nolimits_{n=1}^{M}(d\,v_{t,n}v_{t,n}^\top-I)\Delta_{t,n}}_{T_2}.
\end{align} 
For the first term, write
\[
T_1=\frac{1}{M}\sum\nolimits_{n=1}^{M}X_n,
\qquad
X_n := (d\,v_{t,n}v_{t,n}^\top-I)\tilde g_t .
\]
Conditioned on $\tilde g_t$, the vectors $\{X_n\}_{n=1}^M$ are independent and mean-zero since
$
\mathbb E[d\,v_{t,n}v_{t,n}^\top]=I.
$
Now define $u_{t,n}:=\sqrt d\,v_{t,n}\in\{\pm1\}^d$. Then
$
d\,v_{t,n}v_{t,n}^\top = u_{t,n}u_{t,n}^\top.
$
Hence
\[
\|d\,v_{t,n}v_{t,n}^\top-I\|_{\mathrm{op}}
=
\|u_{t,n}u_{t,n}^\top-I\|_{\mathrm{op}}
=
d-1.
\]
Therefore,
$
\|X_n\|_2
\le
(d-1)\|\tilde g_t\|_2
=:R_1.$
Also, using again $u_{t,n}=\sqrt d\,v_{t,n}$ and 
\[
\mathbb E\|(d\,v_{t,n}v_{t,n}^\top-I)z\|_2^2
=
\mathbb E\|(u_{t,n}u_{t,n}^\top-I)z\|_2^2
=
(d-1)\|z\|_2^2 .
\]
Thus,
\[
\sum\nolimits_{n=1}^{M}\mathbb E\|X_n\|_2^2
=
M(d-1)\|\tilde g_t\|_2^2
=:V_1.
\]

Applying a vector Bernstein inequality gives, with probability at least $1-\delta/2$ \cite{tropp2012user}\cite[Corollary 6.2.1]{tropp2015introduction},
\begin{align}
\label{eq:T1_bound_normalized}
\|T_1\|_2
\le C\,\|\tilde g_t\|_2
\sqrt{\frac{(d-1)\log(2d/\delta)}{M}}
\nonumber\\
+\,
C\,\|\tilde g_t\|_2
\frac{(d-1)\log(2d/\delta)}{M}.
\end{align}
for an absolute constant $C>0$. For the second term, write
$
T_2=\frac{1}{M}\sum_{n=1}^{M}Y_n,
\qquad
Y_n := (d\,v_{t,n}v_{t,n}^\top-I)\Delta_{t,n}.
$
Conditioned on $\{\Delta_{t,n}\}_{n=1}^{M}$, the vectors $\{Y_n\}_{n=1}^{M}$ are independent and mean-zero. Similar to the first term we have 
\[
\|Y_n\|_2
\le
(d-1)\|\Delta_{t,n}\|_2
\le
(d-1)B_t
=:R.
\]
Also,
$
\sum_{n=1}^{M}\mathbb E\|Y_n\|_2^2
=
(d-1)\sum_{n=1}^{M}\|\Delta_{t,n}\|_2^2
\le
(d-1)\,M\,\sigma_t^2
=:V.
$
Hence, similarly by vector Bernstein, with probability at least $1-\delta/2$,
\begin{equation}\label{eq:T2_bound_normalized}
\|T_2\|_2
\!\le\!
C\!\!\left(
\!\sigma_t\sqrt{\frac{(d-1)\log(2d/\delta)}{M}}
\!+
\!B_t\frac{(d-1)\log(2d/\delta)}{M}
\!\!\right),
\end{equation}
for an absolute constant $C>0$. Finally, a union bound over the events \eqref{eq:T1_bound_normalized} and
\eqref{eq:T2_bound_normalized} yields probability at least $1-\delta$.
Summing the two bounds proves \eqref{eq:hp_eproj_general_normalized}.
\boxend\end{proof}

Lemma~\ref{lem:hp_eproj_general_normalized} controls $e_t^{\mathrm{proj}}$ at a
single round. To establish stability over the full optimization horizon, we next
strengthen this result to a uniform high-probability bound that holds
simultaneously for all rounds $t=0,\dots,T-1$.

\begin{lem}[Uniform high-probability bound for the projection error $e^{\mathrm{proj}}_t$ over $T$ rounds]
\label{lem:hp_eproj_uniform_T_normalized}
Under the setup of Lemma~\ref{lem:hp_eproj_general_normalized}, fix any horizon
$T\ge 1$ and any $\delta\in(0,1)$.
Then, with probability at least $1-\delta$, the following bound holds simultaneously
for all rounds $t=0,\ldots,T-1$:
\begin{align}\label{eq:hp_eproj_uniformT_normalized}
\|e^{\mathrm{proj}}_t\|_2
& \le
C\sqrt{\frac{(d-1)\log\!\bigl(2dT/\delta\bigr)}{M}}
\Big(\|\tilde g_t\|_2+\sigma_t\Big)
\nonumber\\
&\quad+
C\,\frac{(d-1)\log\!\bigl(2dT/\delta\bigr)}{M}\,
(\|\tilde g_t\|_2 + B_t),
\end{align}
for an absolute constant $C>0$.
\end{lem}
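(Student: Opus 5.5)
The plan is to derive the uniform statement from the single-round bound of Lemma~\ref{lem:hp_eproj_general_normalized} by a union bound over the $T$ rounds, applying that lemma at each round with confidence level $\delta/T$ instead of $\delta$.

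\textbf{Per-round events.} For each $t=0,\dots,T-1$, let $\mathcal A_t$ denote the event that \eqref{eq:hp_eproj_uniformT_normalized} holds at round $t$. Lemma~\ref{lem:hp_eproj_general_normalized} treats $\tilde g_{t,n}$ as arbitrary, so it can be applied conditionally on the history $\mathcal F_t$. Here $\mathcal F_t$ collects everything generated before round $t$ projects, including the rollouts that produce $\tilde g_{t,n}$ at the current iterate $K_t$. The Rademacher directions $\{v_{t,n}\}_{n=1}^M$ are drawn freshly at round $t$ and are independent of $\mathcal F_t$. Conditioned on $\mathcal F_t$, the vectors $\tilde g_t$, $\Delta_{t,n}$ and the quantities $\sigma_t,B_t$ are therefore fixed, and the vector Bernstein argument goes through unchanged. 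Invoking the lemma with $\delta/T$ in place of $\delta$, the logarithm $\log(2d/\delta)$ becomes $\log(2dT/\delta)$, which is exactly the form in \eqref{eq:hp_eproj_uniformT_normalized}. This gives $\mathbb P(\mathcal A_t^c\mid\mathcal F_t)\le \delta/T$ almost surely, and taking expectations yields $\mathbb P(\mathcal A_t^c)\le\delta/T$ unconditionally.

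\textbf{Union bound.} We then have $\mathbb P\bigl(\bigcup_{t=0}^{T-1}\mathcal A_t^c\bigr)\le \sum_{t=0}^{T-1}\delta/T=\delta$. Hence all $T$ bounds hold simultaneously with probability at least $1-\delta$, with the same absolute constant $C$ as in Lemma~\ref{lem:hp_eproj_general_normalized}. No independence across rounds is needed for this step.

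\textbf{Main obstacle.} The delicate point is justifying the conditioning. The quantities $\tilde g_t$, $\sigma_t$ and $B_t$ are random and depend on earlier directions through $K_t$, so the per-round bound has random right-hand sides. This is why the argument must be phrased conditionally on $\mathcal F_t$ and rely on the fact that fresh seeds make $v_{t,n}$ independent of $\mathcal F_t$. Assumption~\ref{ass:heterogeneity} must also be read as holding for the realized $\Delta_{t,n}$ at each round, so that $\sigma_t$ and $B_t$ are $\mathcal F_t$-measurable bounds. Once this is in place, the rest is the elementary union bound above.
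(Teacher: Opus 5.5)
Your proposal is correct and follows the paper's proof: it applies Lemma~\ref{lem:hp_eproj_general_normalized} at each round with failure probability $\delta/T$, which turns $\log(2d/\delta)$ into $\log(2dT/\delta)$, and then takes a union bound over $t=0,\dots,T-1$. Your conditioning on the history $\mathcal F_t$, using that the fresh directions $v_{t,n}$ are independent of it, is a valid extra step of rigor that the paper leaves implicit when it applies the single-round lemma to the random, iterate-dependent quantities $\tilde g_t$, $\sigma_t$, $B_t$.
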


\begin{proof}
Apply Lemma~\ref{lem:hp_eproj_general_normalized} at each round with failure probability
$\delta/T$. For any fixed $t$, we obtain
\begin{align}
\Pr\!\bigg(
\|e^{\mathrm{proj}}_t\|_2
\le
C\sqrt{\frac{(d-1)\log(2dT/\delta)}{M}}
\bigl(\|\tilde g_t\|_2+\sigma_t\bigr)
+
\nonumber \\
C\,\frac{(d-1)\log(2dT/\delta)}{M}
\bigl(\|\tilde g_t\|_2+B_t\bigr)
\bigg)
\ge 1-\frac{\delta}{T}.\nonumber
\end{align}
Finally, a union bound over $t=0,\dots,T-1$ yields
\[
\Pr\!\left(
\bigcap_{t=0}^{T-1}
\left\{
\eqref{eq:hp_eproj_uniformT_normalized}\ \text{holds}
\right\}
\right)
\ge
1-\sum_{t=0}^{T-1}\frac{\delta}{T}
=
1-\delta,
\]
which proves \eqref{eq:hp_eproj_uniformT_normalized}.
\boxend\end{proof}

\end{document}